\newtheorem{theorem}{Theorem}
\newtheorem{proposition}[theorem]{Proposition}
\title{Increased Capacity per Unit-Cost by Oversampling}
\author{Tobias Koch\\\small University of Cambridge\\\small Cambridge CB2 1PZ, UK\\\small Email: tobi.koch@eng.cam.ac.uk \and Amos Lapidoth\\\small ETH Zurich\\\small 8092 Zurich, Switzerland\\\small
    Email: lapidoth@isi.ee.ethz.ch
 }
\date{}
\begin{document}

\maketitle

\begin{abstract}
It is demonstrated that doubling the sampling rate recovers some of the loss in capacity incurred on the bandlimited Gaussian channel with a one-bit output quantizer.
\renewcommand{\thefootnote}{}
\footnote{The research leading to these results has received funding from the European Community's Seventh Framework Programme (FP7/2007-2013) under grant agreement No. 252663.}
\end{abstract}
\setcounter{footnote}{0}

\section{Introduction}
\label{sec:intro}
We study the capacity of the continuous-time, bandlimited, additive white Gaussian noise (AWGN) channel with one-bit output quantization. Our focus is on the capacity at low transmit powers, i.e., on the capacity per unit-cost, which is defined as the slope of the capacity-vs-input-power curve at zero. We show that increasing the sampling rate reduces the loss in capacity per unit-cost caused by the quantization.

The capacity of the continuous-time AWGN channel without output quantization was studied by Shannon \cite{shannon48}. He showed that if the channel input is bandlimited to $\WW$ Hz and satisfies the average-power constraint $\const{P}$, and if the additive Gaussian noise is of double-sided power spectral density $\Nzero/2$, then the capacity (in nats per second) is given by (see also \cite{gallager68})
\begin{equation}
C(\const{P}) = \WW \log\biggl(1+\frac{\const{P}}{\WW\Nzero}\biggr)
\end{equation}
where $\log(\cdot)$ denotes the natural logarithm function. This capacity can be achieved by transmitting
\begin{equation}
 X(t) = \sum_{\ell=-\infty}^{\infty} X_{\ell} \sinc(2\WW \,t-\ell), \quad t\in\Reals
\end{equation}
(where $\Reals$ denotes the set of real numbers), and by sampling the output $Y(\cdot)$ at Nyquist rate $2\WW$. Here $\{X_{\ell},\,\ell\in\Integers\}$ (where $\Integers$ denotes the set of integers) is a sequence of independent and identically distributed (IID) Gaussian random variables of zero mean and variance $\const{P}$, and $t\mapsto\sinc(t)$ denotes the sinc-function, i.e.,
\begin{equation*}
\sinc(t) = \left\{\begin{array}{ll} 1, \quad & t=0\\ \displaystyle \frac{\sin(\pi t)}{\pi t}, \quad & t \neq 0.\end{array}\right.
\end{equation*}

The above (capacity-achieving) transmission scheme reduces the continuous-time channel to a discrete-time AWGN channel with inputs $\{X_{\ell},\,\ell\in\Integers\}$ and outputs $\bigl\{Y\bigl(\ell/(2\WW)\bigr),\,\ell\in\Integers\bigr\}$. Yet, it is often required that the channel inputs and outputs be not only discrete in time, but also take on a discrete value, i.e., take value in a finite set rather than in $\Reals$. This is, for example, the case if the transmitter and receiver use digital signal processing techniques. To ensure that the channel inputs are discrete-valued, we can simply restrict ourselves to finite input alphabets. This restriction is not critical for small input powers $\const{P}$. Indeed, it is well-known that binary inputs achieve the capacity per unit-cost of the AWGN channel \cite{shannon48}. To ensure that the channel outputs are discrete-valued, we have to employ a quantizer (analog-to-digital converter), which approximates the continuous-valued output by a finite number of bits.

The capacity (in nats per channel use) of the discrete-time AWGN channel with binary symmetric output quantization---where the quantizer produces $1$ for a nonnegative output and $-1$ for a negative output---is given by
\begin{equation}
\label{eq:capacityHL}
\log 2-H_b\Bigl(Q\bigl(\sqrt{\const{P}/\sigma^2}\bigr)\Bigr)
\end{equation}
where $\sigma^2$ denotes the variance of the additive noise, $H_b(\cdot)$ the binary entropy function, and $Q(\cdot)$ the $Q$-function; see \cite[(3.4.18)]{viterbiomura79}, \cite[p.~107]{mceliece02}, \cite[Thm.~2]{singhdabeermadhow09_1}. To the best of our knowledge, there exists no closed-form expression for the capacity of the discrete-time AWGN channel with nonbinary output quantization. However, numerical results are, for example, given in \cite{singhdabeermadhow09_1}. Furthermore, there exist analytical results concerning the capacity per-unit cost. For example, it was demonstrated that if a binary symmetric quantizer is employed, then the capacity per unit-cost equals $\frac{1}{\pi}\frac{1}{\sigma^2}$ \cite[(3.4.20)]{viterbiomura79}. It was further demonstrated that for an octal quantizer with uniform quantization the capacity per unit-cost is not less than $0.475\frac{1}{\sigma^2}$ \cite[(3.4.21)]{viterbiomura79}. Thus, at low transmit power, employing a binary quantizer causes a loss of a factor of $2/\pi$ relative to the capacity per unit-cost $\frac{1}{2}\frac{1}{\sigma^2}$ for unquantized decoding \cite{shannon48}. In contrast, by quantizing the output with 3 bits, a capacity per unit-cost can be achieved that is close to the capacity per unit-cost for unquantized decoding. (Note that the capacity of discrete-time channels is measured in nats per channel use, whereas the capacity of continuous-time channels is measured in nats per second. Since with a continuous-time signal of bandwidth $\WW$ Hz we can approximately transmit $2\WW$ samples per second, we have that one nat per channel use corresponds to $2\WW$ nats per second. By further noting that lowpass filtering and sampling Gaussian noise of double-sided power spectral density $\Nzero/2$ yields Gaussian noise-samples of variance $\WW\Nzero$, it follows that the capacity per unit-cost of the continuous-time channel corresponds to $2\WW$ times the capacity per unit-cost of the discrete-time channel with $\sigma^2$ replaced by $\WW\Nzero$.)

The above results suggest that, in order to reduce the loss in capacity per unit-cost caused by the quantization, one needs to increase the quantizer's resolution. However, while this clearly holds for the discrete-time channel, this does not necessarily hold for the underlying continuous-time channel. Indeed, in contrast to the unquantized channel output, the quantized output is not bandlimited, and it is therefore \emph{prima facie} not clear, whether sampling the quantized output at Nyquist rate is optimal. One might thus increase the capacity of the continuous-time channel by oversampling the quantized output, i.e., by sampling the quantized output at rates higher than the Nyquist rate.

When there is no additive noise, it was shown by Gilbert \cite{gilbert93} and by Shamai \cite{shamai94} that oversampling indeed increases the capacity. In this paper, we demonstrate that oversampling also increases the capacity when the noise power is large relative to the transmit power. In particular, we show that, for binary symmetric output quantization, sampling the quantized output at twice the Nyquist rate yields a capacity per unit-cost that is not less than $0.75\frac{1}{\Nzero}$, which is strictly larger than the capacity per unit-cost $\frac{2}{\pi}\frac{1}{\Nzero}\approx0.64\frac{1}{\Nzero}$ that can be achieved by sampling the quantized output at Nyquist rate.

The rest of this paper is organized as follows. Section~\ref{sec:channel} describes the channel model. Section~\ref{sec:capacity} defines channel capacity and capacity per unit-cost and presents the main result. Section~\ref{sec:proof} provides the proofs of the main result. Section~\ref{sec:summary} concludes the paper with a discussion of our results.

\section{Channel Model}
\label{sec:channel}
\begin{figure}[t]
\centering
\psfrag{x(t)}[r][r]{$x(t)$}
 \psfrag{Z(t)}[cc][cc]{$Z(t)$}
\psfrag{LPF}[cc][cc]{$\textnormal{LPF}_{\WW}$}
\psfrag{k}[tc][tc]{$k\Ts$}
\psfrag{lowpass}[cc][cc]{\footnotesize lowpass filter}
\psfrag{hard-limiter}[cc][cc]{\footnotesize hard-limiter}
\psfrag{Y(k)}[l][l]{$Y(k\Ts)$}
\epsfig{file=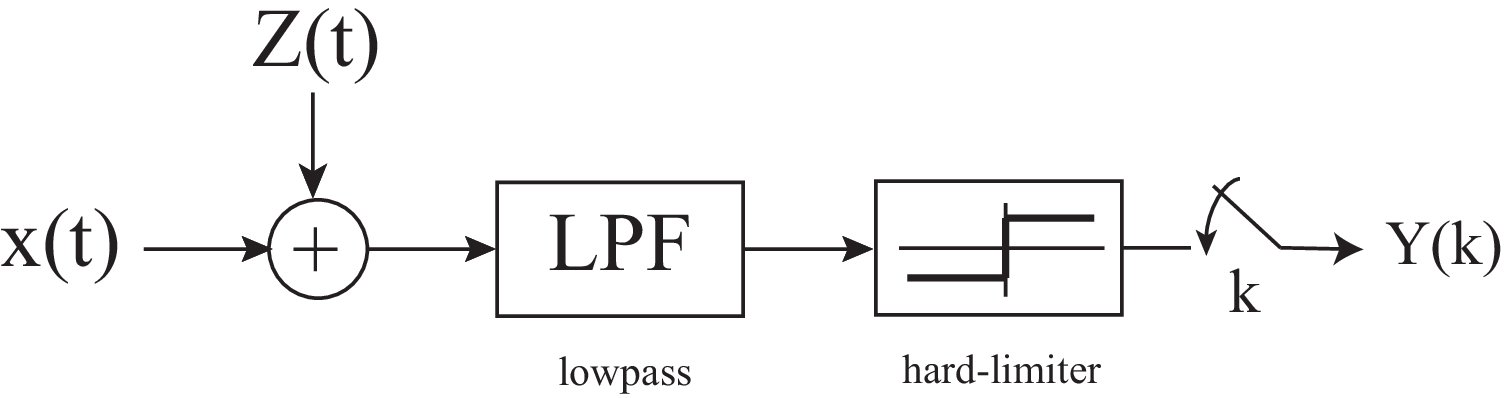, width=0.8\textwidth}
 \caption{System model.}
 \label{fig1}
\end{figure}

We consider the communication channel depicted in Figure~\ref{fig1} whose input $x(\cdot)$ is bandlimited to $\WW$ Hz and satisfies the average-power constraint $\const{P}$. The channel output $Y(k\Ts)$ at integer multiples of the sampling interval $\Ts>0$ is
\begin{equation}
\label{eq:channel}
Y(k\Ts) = \sgn\left\{\bigl(\conv{(\bfx+\bfZ)}{\textnormal{LPF}_{\WW}}\bigr)(k\Ts)\right\}, \quad k\in\Integers
\end{equation}
where $\sgn\{\cdot\}$ denotes the sign function; $(\conv{\vect{a}}{\vect{b}})(t)$ the convolution between $a(\cdot)$ and $b(\cdot)$ at time $t$; and $\textnormal{LPF}_{\WW}(\cdot)$ is the impulse response of the ideal unit-gain lowpass filter of cutoff frequency $\WW$. The hard-limiter is a binary symmetric quantizer that produces $1$ for a nonnegative input and $-1$ for a negative input.  We assume that $\{Z(t),\,t\in\Reals\}$ is white Gaussian noise of double-sided power spectral density $\Nzero/2$. 

Without loss of optimality, we restrict ourselves to signals $x(\cdot)$ of the form
\begin{equation}
\label{eq:PAM}
x(t) = \frac{1}{\sqrt{2\WW}}\sum_{\ell=-\infty}^{\infty} x_{\ell}g\left(t-\frac{\ell}{2\WW}\right), \quad t\in\Reals
\end{equation}
where $g(\cdot)$ is some unit-energy waveform that is bandlimited to $\WW$ Hz. Indeed, by the Sampling Theorem \cite[Thm.~8.4.5]{lapidoth09}, any signal $x(\cdot)$ that is bandlimited to $\WW$ Hz can be written as \eqref{eq:PAM} with
\begin{equation*}
x_{\ell}=x\left(\frac{\ell}{2\WW}\right), \quad \ell\in\Integers \qquad \textnormal{and} \qquad g\left(t\right)=\sqrt{2\WW}\sinc\bigl(2\WW\,t\bigr), \quad t\in\Reals.
\end{equation*}

\section{Channel Capacity and Capacity per Unit-Cost}
\label{sec:capacity}
We define the capacity (in nats per second) as
\begin{equation}
\label{eq:capacity}
C_{\Ts}(\const{P}) \triangleq \varliminf_{n\to\infty} \sup \frac{2\WW}{n} I\bigl(X_1^n;\vect{Y}_1^n\bigr)
\end{equation}
where the supremum is over all unit-energy waveforms $g(\cdot)$ that are bandlimited to $\WW$ Hz and over all joint distributions on $(X_1,X_2,\ldots,X_n)$ satisfying $\frac{1}{n}\sum_{k=1}^n \E{X^2_{k}} \leq \const{P}$. Here $\varliminf$ denotes the \emph{limit inferior}; $A_m^n$ is used to denote the sequence $A_m,A_{m+1},\ldots,A_n$; and
\begin{equation*}
\vect{Y}_k\triangleq\Biggl(Y\left(\left\lceil \frac{2k-1}{4\WW\Ts}\right\rceil\Ts\right),Y\left(\left\lceil \frac{2k-1}{4\WW\Ts}\right\rceil\Ts+\Ts\right),\ldots,Y\left(\left\lfloor \frac{2k+1}{4\WW\Ts}\right\rfloor\Ts\right)\Biggr)
\end{equation*}
(with $\lceil\cdot\rceil$ and $\lfloor\cdot\rfloor$ denoting the ceiling and the floor function). A more general definition of channel capacity for continuous-time channels can be found in \cite[Sec.~8.1]{gallager68}. For the above channel \eqref{eq:channel}, the capacity $C_{\Ts}(\const{P})$ defined by \eqref{eq:capacity} is a lower bound on the capacity defined in \cite[Sec.~8.1]{gallager68}. The two capacities coincide, for example, for the continuous-time AWGN channel (without output quantization).

That oversampling can increase the capacity of the above channel has been demonstrated in the noiseless case, i.e., when $\Nzero=0$. In particular, Gilbert  \cite{gilbert93} showed that, for a Gaussian input $X(\cdot)$, sampling the output at twice the Nyquist rate yields an information rate of $2.14\WW$ bits per second, which is strictly larger than the $2\WW$ bits per second that can be achieved by sampling the output at Nyquist rate. Shamai \cite{shamai94} further showed, \emph{inter alia}, that by sampling the output at $\eta$-times the Nyquist rate, rates of $2\WW\log(1+\eta)$ nats per second are achievable by transmitting a bandlimited process that possesses a single real zero within each Nyquist interval. In the absence of noise it is thus possible to trade amplitude resolution versus time resolution.

In this paper, we focus on the case where the variance of the additive noise is large relative to the transmit power. In particular, we study the capacity per unit-cost, defined as
\begin{equation}
\label{eq:capacityperunitcost}
\dot{C}_{\Ts}(0) \triangleq \varlimsup_{\const{P}\downarrow 0} \frac{C_{\Ts}(\const{P})}{\const{P}}
\end{equation}
(where $\varlimsup$ denotes the \emph{limit superior}). By the Data Processing Inequality \cite[Thm.~2.8.1]{coverthomas91} it follows that quantizing the output does not increase the capacity. This implies that the capacity per unit-cost is upper bounded by the capacity per unit cost of the continuous-time AWGN channel (without output quantization)
\begin{equation}
\dot{C}_{\Ts}(0) \leq \lim_{\const{P}\to\infty}\frac{\WW\log\left(1+\frac{\const{P}}{\WW\Nzero}\right)}{\const{P}} = \frac{1}{\Nzero}.
\end{equation}
For the case where the output is sampled at Nyquist rate $1/\Ts=2\WW$, it was shown that the capacity per unit-cost is given by \cite[(3.4.20)]{viterbiomura79}
\begin{equation}
\dot{C}_{\frac{1}{2\WW}}(0) = \frac{2}{\pi}\frac{1}{\Nzero} \approx 0.637\frac{1}{\Nzero}.
\end{equation}
Thus, when we sample the output at Nyquist rate, hard-limiting causes a loss of a factor of $2/\pi$.  This loss can be reduced by sampling the output at twice the Nyquist rate:

\begin{theorem}[Main Result]
\label{thm:main}
Sampling the output at rate $4\WW$ yields
\begin{IEEEeqnarray}{lCl}
\dot{C}_{\frac{1}{4\WW}}(0) & \geq & \frac{2}{\pi}\frac{1}{\Nzero} \left[\frac{\left(\frac{g_{1}}{2}+\frac{g_0}{4}+\frac{g_{1}}{\pi}\arcsin\left(\frac{\rho}{\sqrt{1-\rho^2}}\right)-\frac{g_0}{2\pi}\arcsin\left(\frac{\rho^2}{1-\rho^2}\right)\right)^2}{\frac{1}{4}+\frac{1}{\pi}\arcsin\left(\rho\right)}\right.\nonumber\\
& & \qquad\qquad {}  + 8 \left(\frac{g_0}{4}-\frac{g_{1}}{\pi}\arcsin\left(\frac{\rho}{\sqrt{1-\rho^2}}\right)+\frac{g_0}{2\pi}\arcsin\left(\frac{\rho^2}{1-\rho^2}\right)\right)^2\nonumber\\
&  & \qquad\qquad {} + \left.\frac{\left(\frac{g_{1}}{2}-\frac{g_0}{4}-\frac{g_{1}}{\pi}\arcsin\left(\frac{\rho}{\sqrt{1-\rho^2}}\right)+\frac{g_0}{2\pi}\arcsin\left(\frac{\rho^2}{1-\rho^2}\right)\right)^2}{\frac{1}{4}-\frac{1}{\pi}\arcsin\left(\rho\right)}\qquad\!\!\right]\nonumber\\
& \approx & 0.747\frac{1}{\Nzero}\IEEEeqnarraynumspace
\end{IEEEeqnarray}
where
\begin{equation*}
\rho=\frac{2}{\pi}, \qquad g_0 = \left.\frac{1+\frac{2}{\pi}\lambda}{\sqrt{\frac{1}{2}\lambda^2+\frac{4}{\pi}\lambda+1}}\right|_{\lambda=1.4},\qquad\textit{and}\qquad g_1 =  \left.\frac{\frac{2}{\pi}+\frac{1}{2}\lambda}{\sqrt{\frac{1}{2}\lambda^2+\frac{4}{\pi}\lambda+1}}\right|_{\lambda=1.4}.
\end{equation*}
\end{theorem}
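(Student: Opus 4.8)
The plan is to prove the inequality by exhibiting a concrete achievable scheme and evaluating its capacity per unit-cost. As noted below \eqref{eq:capacityperunitcost}, it suffices to lower-bound $2\WW$ times the per-unit-cost of the induced discrete-time channel whose noise variance is $\sigma^2=\WW\Nzero$. First I would reduce the continuous-time channel, sampled at rate $4\WW$, to a discrete-time channel whose output in each Nyquist interval is the pair of signs of the lowpass-filtered signal-plus-noise at the on-grid instant and at the intermediate (half-interval) instant. The filtered noise is stationary and Gaussian with autocorrelation proportional to $\sinc(2\WW\tau)$; hence two samples spaced $1/(4\WW)$ apart have correlation coefficient $\rho=\sinc(1/2)=2/\pi$, whereas samples spaced $1/(2\WW)$ apart are \emph{uncorrelated}, since $\sinc(1)=0$. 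This is the origin of the value $\rho=2/\pi$.

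For the lower bound I would fix the one-parameter family of unit-energy pulses obtained by adding to the Nyquist sinc a weight-$\lambda$ multiple of the average of its two half-interval shifts; the resulting energy normalization is $\sqrt{\tfrac12\lambda^2+\tfrac4\pi\lambda+1}$, and the pulse's on-grid sample and its two adjacent half-interval samples (measured in units of $\sqrt{2\WW}$) are exactly the $g_0$ and $g_1$ of the statement, proportional to $1+\tfrac2\pi\lambda$ and $\tfrac2\pi+\tfrac12\lambda$. I would then transmit by \emph{on-off} signaling, letting a single isolated symbol carry amplitude $x$ while its neighbours stay silent, and let the receiver observe only the three signs $(s_{-1},s_0,s_1)$ straddling the active symbol. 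Discarding the remaining samples cannot increase mutual information, so this is a legitimate lower bound; and because the neighbours are silent there is no inter-symbol interference, so the retained channel is memoryless, with signal means proportional to $x(g_1,g_0,g_1)$ and noise a standard trivariate Gaussian with correlations $(\rho,\rho,0)$. By the single-letter characterization of capacity per unit-cost, this channel's per-unit-cost is $\sup_x D(P_{Y|x}\|P_{Y|0})/x^2\ge\tfrac12 J$, where $J=\sum_{y\in\{\pm1\}^3}(\partial_x P_x(y)|_0)^2/P_0(y)$ is the Fisher information at $x=0$; the limit $x\to0$ already attains the stated value, so the supremum itself need not be computed.

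The evaluation of $J$ splits into the two ingredients visible in the formula. The zero-input probabilities $P_0(y)$ are orthant probabilities of the trivariate Gaussian, which by the arcsin law ($P(\text{all}>0)=\tfrac18+\tfrac1{4\pi}\sum\arcsin\rho_{ij}$) give total mass $\tfrac14+\tfrac1\pi\arcsin\rho$ to the two all-equal patterns, $\tfrac14-\tfrac1\pi\arcsin\rho$ to the single alternating pattern, and exactly $\tfrac18$ to each of the four remaining patterns (since $\sinc(1)=0$); these are precisely the three denominators, the last explaining the factor $8$. The sensitivities $\partial_x P_x(y)|_0$ are obtained by differentiating each orthant probability in its mean vector: every such derivative equals $\phi(0)$ times a \emph{conditional} bivariate orthant probability whose correlation is the partial correlation after conditioning out the differentiated coordinate. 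Conditioning on a half-interval sample leaves partial correlation $\rho/\sqrt{1-\rho^2}$, and conditioning on the on-grid sample leaves $-\rho^2/(1-\rho^2)$; weighting by $g_1$ (twice) and $g_0$ then reproduces the numerators $\tfrac{g_1}{2}+\tfrac{g_0}{4}+\tfrac{g_1}\pi\arcsin(\rho/\sqrt{1-\rho^2})-\tfrac{g_0}{2\pi}\arcsin(\rho^2/(1-\rho^2))$ and their sign-variants. Assembling $\tfrac12 J$, carrying the factor $\phi(0)^2=1/(2\pi)$ together with the normalizations $2\WW$ and $\sigma^2=\WW\Nzero$, collapses the prefactor to $\tfrac2\pi\tfrac1\Nzero$; a final one-dimensional maximization over $\lambda$ (numerically $\lambda=1.4$) yields $\approx0.747/\Nzero$.

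I expect the genuine obstacle to be the \emph{justification of the reduction}, not the trigonometry. One must verify that on-off signaling with a restricted three-sample observation window is admissible inside the definitions \eqref{eq:capacity}--\eqref{eq:capacityperunitcost}, and that its per-second per-unit-cost is indeed $2\WW$ times the computed single-symbol quantity. In particular the infinitely supported sinc tails make distant symbols interfere slightly even under on-off signaling, so one must argue that this residual interference, and the neglected output samples, can only vanish or help in the relevant limit and therefore cannot invalidate the inequality. Once the reduction is secured, the orthant-probability differentiations are routine bookkeeping in the conditional correlations $\rho/\sqrt{1-\rho^2}$ and $\rho^2/(1-\rho^2)$.
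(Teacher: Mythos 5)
Your computational core matches the paper: the correlation $\rho=\sinc(1/2)=2/\pi$ with $\sinc(1)=0$, the orthant probabilities producing the three denominators (the coefficient $8$ coming from the four sign patterns of mass $\tfrac18$ once $\alpha_0=\gamma_0$), the partial correlations $\rho/\sqrt{1-\rho^2}$ and $-\rho^2/(1-\rho^2)$ in the first-order sensitivities, the pulse family with normalization $\sqrt{\tfrac12\lambda^2+\tfrac4\pi\lambda+1}$, and the final one-dimensional optimization at $\lambda=1.4$. Where you diverge is the information-theoretic reduction. The paper does \emph{not} use on--off signaling and the divergence formula $\sup_x D(P_{Y|x}\|P_{Y|0})/x^2$; it transmits IID equiprobable $\pm\sqrt{\const{P}}$ symbols, lower-bounds $I(X_1^n;\vect{Y}_1^n)$ by $n\,I(X_1;Y_{\frac12},Y_1,Y_{\frac32})$ via the chain rule and discarding observations, and expands $I(X_1;Y_{\frac12},Y_1,Y_{\frac32})=\log 2-\E{H_b(\wp)}$ to second order in $\wp-\tfrac12$, which yields the same Fisher-information-type quantity $\tfrac{\const{P}}{2}J+o(\const{P})$. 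The two routes give the identical expression, so yours is not wrong in principle; but the paper's choice is exactly what closes the gap you flag at the end.

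Concretely, the gap is intersymbol interference, and it is not ``routine bookkeeping.'' With zero-mean IID inputs, the $\sqrt{\const{P}}$-order term of each conditional orthant probability is \emph{linear} in the interfering symbols $\{X_\ell\}_{\ell\neq 1}$, so after averaging it depends on them only through $\E{X_{\ell}}=0$; all ISI is pushed into the $o(\sqrt{\const{P}})$ remainder, which the paper controls \emph{uniformly} over interference realizations via the bounded, monotone $\eta$ of Propositions~\ref{prop:binary} and \ref{prop:ternary} together with the absolute-summability conditions \eqref{eq:g1}--\eqref{eq:g3}. Your on--off scheme breaks this mechanism: active symbols have nonzero mean, so the sinc-tail interference contributes at first order unless the duty cycle is driven to zero; the assertion that residual interference ``can only vanish or help'' cannot simply be made (interference generically hurts); and the single-letter capacity-per-unit-cost formula you invoke is a theorem about memoryless channels, which the three-sample-window channel with tails is not. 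You would also still need the uniform remainder control to average the orthant expansion over interference realizations. A vanishing-duty-cycle argument could probably be made rigorous, but as written the reduction is the missing proof; the paper's BPSK-plus-mutual-information expansion is precisely the device that renders it unnecessary.
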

\begin{proof}
See Section~\ref{sub:mainproof}.
\end{proof}

The main ingredients in the proof of Theorem~\ref{thm:main} are expansions of the complementary cumulative distribution function (CCDF) of bivariate and trivariate Gaussian vectors around the orthant probability.\footnote{The orthant probability is the probability that all components of a random vector have the same sign.} We present these expansions in the following two propositions.

\begin{proposition}
\label{prop:binary}
Let $(x,y)\mapsto\phi_{\vect{0},\mat{K}}(x,y)$ denote the probability density function (PDF) of the bivariate, zero-mean, Gaussian vector of covariance matrix
\begin{equation*}
\mat{K} = \left(\begin{array}{cc} 1 & \varrho\\ \varrho & 1\end{array}\right)
\end{equation*}
for $|\varrho|<1$. Then, for every $\const{A}\geq 0$, $\alpha\in\Reals$ and $\beta\in\Reals$,
\begin{equation}
\int_{-\alpha\const{A}}^{\infty}\int_{-\beta\const{A}}^{\infty} \phi_{\vect{0},\mat{K}}(x,y)\d y\d x =\frac{1}{4}+\frac{1}{2\pi}\arcsin(\varrho)+\frac{\alpha+\beta}{2}\frac{\const{A}}{\sqrt{2\pi}} + \Delta(\const{A},\alpha,\beta)
\end{equation}
where
\begin{equation*}
|\Delta(\const{A},\alpha,\beta)| \leq \const{A}^2 \eta(\const{A},\alpha,\beta)
\end{equation*}
and where $\eta(\const{A},\alpha,\beta)=\eta(\const{A},|\alpha|,|\beta|)$ is monotonically increasing in $(\const{A}, |\alpha|, |\beta|)$ and is bounded for every finite $\const{A}$, $\alpha$, and $\beta$.
\end{proposition}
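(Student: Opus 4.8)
The plan is to regard the left-hand side as a smooth function of the single scalar $\const{A}$ (for fixed $\alpha$, $\beta$, $\varrho$) and to Taylor-expand it to first order about $\const{A}=0$, letting the quadratic remainder supply the bound $\const{A}^2\eta$. I would set
\begin{equation*}
F(\const{A}) \triangleq \int_{-\alpha\const{A}}^{\infty}\int_{-\beta\const{A}}^{\infty}\phi_{\vect{0},\mat{K}}(x,y)\,\ud y\,\ud x,
\end{equation*}
so that $F(0)=\Pr[X\geq 0,\,Y\geq 0]$ is the orthant probability of the bivariate Gaussian with correlation $\varrho$. First I would invoke Sheppard's orthant formula to identify $F(0)=\frac14+\frac{1}{2\pi}\arcsin(\varrho)$, which is exactly the constant term claimed.

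Next I would compute $F'(0)$. Differentiating through the two moving lower limits by the Leibniz rule gives
\begin{equation*}
F'(\const{A}) = \alpha\int_{-\beta\const{A}}^{\infty}\phi_{\vect{0},\mat{K}}(-\alpha\const{A},y)\,\ud y + \beta\int_{-\alpha\const{A}}^{\infty}\phi_{\vect{0},\mat{K}}(x,-\beta\const{A})\,\ud x.
\end{equation*}
Setting $\const{A}=0$ and using the elementary Gaussian integral $\int_{0}^{\infty}\phi_{\vect{0},\mat{K}}(0,y)\,\ud y=\int_{0}^{\infty}\phi_{\vect{0},\mat{K}}(x,0)\,\ud x=\frac{1}{2\sqrt{2\pi}}$ (the $\sqrt{1-\varrho^2}$ in the density's prefactor cancelling the one produced by the $y$-integration) yields $F'(0)=\frac{\alpha+\beta}{2}\frac{1}{\sqrt{2\pi}}$, which is precisely the linear term.

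By Taylor's theorem with Lagrange remainder, $\Delta(\const{A},\alpha,\beta)=\frac12\const{A}^2 F''(\xi)$ for some $\xi\in[0,\const{A}]$, so the factor $\const{A}^2$ is automatic and it remains only to bound $F''$. Differentiating once more produces a boundary term proportional to $\alpha\beta\,\phi_{\vect{0},\mat{K}}(-\alpha\const{A},-\beta\const{A})$ together with two terms proportional to $\alpha^2$ and $\beta^2$ carrying line integrals of $\frac{\partial}{\partial x}\phi_{\vect{0},\mat{K}}$ and $\frac{\partial}{\partial y}\phi_{\vect{0},\mat{K}}$. The boundary term I would control by $\phi_{\vect{0},\mat{K}}\leq(2\pi\sqrt{1-\varrho^2})^{-1}$; for the other two I would note that $\int_{-\infty}^{\infty}\bigl|\frac{\partial}{\partial x}\phi_{\vect{0},\mat{K}}(x,y)\bigr|\,\ud y$ is a bounded function of $x$, since writing $\frac{\partial}{\partial x}\phi_{\vect{0},\mat{K}}=-\frac{x-\varrho y}{1-\varrho^2}\phi_{\vect{0},\mat{K}}$ and conditioning on $X=x$ reduces it to $\E{|W|}$ for a Gaussian $W$, whose linear growth in $|x|$ is suppressed by the Gaussian marginal, giving a finite constant $\bar{J}$ independent of $x$. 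Replacing each line integral by its supremum over $\Reals$ eliminates all $\const{A}$- and sign-dependence and leaves $|F''|\leq |\alpha\beta|/(\pi\sqrt{1-\varrho^2})+(\alpha^2+\beta^2)\bar{J}$; taking $\eta$ equal to one half of this quantity yields a bound that depends on $\alpha,\beta$ only through $|\alpha|,|\beta|$, is monotonically increasing in $(\const{A},|\alpha|,|\beta|)$, and is finite for all finite arguments, as required.

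The step I expect to be the main obstacle is this last one: rigorously justifying the two differentiations under the integral sign against the moving limits, correctly collecting the Leibniz boundary contributions, and — most delicately — producing a remainder bound that is genuinely monotone in $|\alpha|$ and $|\beta|$ rather than merely finite. The coarsening that replaces the Gaussian-decaying line integrals by their global suprema is what buys monotonicity at the cost of tightness, and one must check that this over-bounding preserves the $\const{A}^2$ scaling, which it does because the resulting suprema are constants.
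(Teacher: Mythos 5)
Your argument is correct, and it takes a genuinely different route from the paper. The paper decomposes the quadrant $[-\alpha\const{A},\infty)\times[-\beta\const{A},\infty)$ into the positive orthant plus two half-strips, evaluates each strip by conditioning on one coordinate and Taylor-expanding the Gaussian $Q$-function about zero with the cubic remainder bound $|\delta(x)|\leq|x|^3/(6\sqrt{2\pi})$, and then assembles explicit remainder functions $\eta_1,\eta_2$ term by term in an appendix. You instead treat the whole CCDF as a scalar function $F(\const{A})$, identify $F(0)$ (Sheppard) and $F'(0)$ (Leibniz rule plus the cancellation of the $\sqrt{1-\varrho^2}$ factors, which I checked: $\int_0^\infty\phi_{\vect{0},\mat{K}}(0,y)\,\ud y=\tfrac{1}{2\sqrt{2\pi}}$ indeed), and let the Lagrange form of the remainder produce the $\const{A}^2$ factor in one stroke; your computation $F''(\const{A})=2\alpha\beta\,\phi_{\vect{0},\mat{K}}(-\alpha\const{A},-\beta\const{A})-\alpha^2\int_{-\beta\const{A}}^\infty\partial_x\phi-\beta^2\int_{-\alpha\const{A}}^\infty\partial_y\phi$ is right, and the uniform bound via $\sup_x\int|\partial_x\phi_{\vect{0},\mat{K}}(x,y)|\,\ud y<\infty$ is sound because $\phi_{0,1}(x)\bigl(|x|+\textnormal{const}\bigr)$ is bounded. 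What your route buys: it is shorter, avoids the domain split, sidesteps the paper's restriction $|x|\leq 1$ in the bound on $\delta(\cdot)$, and yields an $\eta$ that is constant in $\const{A}$ (trivially monotone and bounded, which is all the downstream argument needs, since the paper only ever uses $\eta$ as a monotone upper envelope). What the paper's route buys: the strip-integral decomposition of Proposition~\ref{prop:binary} is reused verbatim inside the proof of Proposition~\ref{prop:ternary} (the inner bivariate CCDF after the change of variables is exactly a shifted instance of the bivariate expansion), so the paper's form is the more directly compositional one; your version would require redoing the trivariate case by the same second-derivative method or converting your result into the paper's form before substitution. As a standalone proof of this proposition, yours is complete once the routine justification of differentiating under the integral sign (dominated convergence on compact $\const{A}$-intervals) is written out.
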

\begin{proof}
See Section~\ref{subsub:binary}.
\end{proof}

\begin{proposition}
\label{prop:ternary}
Let $(x,y,z)\mapsto \phi_{\vect{0},\mat{K}}(x,y,z)$ denote the PDF of the trivariate, zero-mean, Gaussian vector of covariance matrix
\begin{equation*}
\mat{K} = \left(\begin{array}{ccc} 1 & \varrho_{12} & \varrho_{13} \\ \varrho_{12} & 1 & \varrho_{23} \\ \varrho_{13} & \varrho_{23} & 1\end{array}\right)
\end{equation*}
for $|\varrho_{12}|<1$, $|\varrho_{13}|<1$, $|\varrho_{23}|<1$ satisfying $\det(\mat{K})>0$ (where $\det(\mat{K})$ denotes the determinant of the matrix $\mat{K}$). Then, for every $\const{A}\geq 0$, $\alpha\in\Reals$, $\beta\in\Reals$, and $\gamma\in\Reals$,
\begin{IEEEeqnarray}{lCl}
\IEEEeqnarraymulticol{3}{l}{\int_{-\alpha\const{A}}^{\infty}\int_{-\beta\const{A}}^{\infty}\int_{-\gamma\const{A}}^{\infty} \phi_{\vect{0},\mat{K}}(x,y,z)\d z\d y \d x}\nonumber\\
\,\,\, & = & \frac{1}{8} + \frac{1}{4\pi}\bigl(\arcsin(\varrho_{12})+\arcsin(\varrho_{13})+\arcsin(\varrho_{23})\bigr)\nonumber\\
	& & {} + \frac{\const{A}}{\sqrt{2\pi}}\Biggl[\frac{\alpha+\beta+\gamma}{4}+\frac{\alpha}{2\pi}\arcsin\left(\frac{\varrho_{23}-\varrho_{12}\varrho_{13}}{\sqrt{(1-\varrho_{12}^2)(1-\varrho_{13}^2)}}\right)\nonumber\\
& & \phantom{{}+\frac{\const{A}}{\sqrt{2\pi}}\Biggl[} +\frac{\beta}{2\pi}\arcsin\left(\frac{\varrho_{13}-\varrho_{12}\varrho_{23}}{\sqrt{(1-\varrho_{12}^2)(1-\varrho_{23}^2)}}\right) + \frac{\gamma}{2\pi}\arcsin\left(\frac{\varrho_{12}-\varrho_{13}\varrho_{23}}{\sqrt{(1-\varrho_{13}^2)(1-\varrho_{23}^2)}}\right)\Biggr]\IEEEeqnarraynumspace\nonumber\\
& & {}  + \Delta(\const{A},\alpha,\beta,\gamma)
\end{IEEEeqnarray}
where
\begin{equation*}
|\Delta(\const{A},\alpha,\beta,\gamma)| \leq \const{A}^2 \eta(\const{A},\alpha,\beta,\gamma)
\end{equation*}
and where $\eta(\const{A},\alpha,\beta,\gamma)=\eta(\const{A},|\alpha|,|\beta|,|\gamma|)$ is monotonically increasing in $(\const{A},|\alpha|,|\beta|,|\gamma|)$ and is bounded for every finite $\const{A}$, $\alpha$, $\beta$, and $\gamma$.
\end{proposition}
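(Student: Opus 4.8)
The plan is to regard the left-hand side as a smooth function of the single scalar $\const{A}$ and to read off the claimed expansion as its second-order Taylor expansion about $\const{A}=0$, with $\Delta(\const{A},\alpha,\beta,\gamma)$ the integral-form remainder. Concretely, writing
\[
G(a,b,c) \triangleq \int_{a}^{\infty}\!\int_{b}^{\infty}\!\int_{c}^{\infty}\phi_{\vect{0},\mat{K}}(x,y,z)\,\ud x\,\ud y\,\ud z,
\]
the quantity of interest is $F(\const{A}) \triangleq G(-\alpha\const{A},-\beta\const{A},-\gamma\const{A})$. Since $\phi_{\vect{0},\mat{K}}$ and all of its partial derivatives are dominated by integrable Gaussian tails, differentiation under the integral sign is justified and $F$ is infinitely differentiable on $\Reals$; Taylor's theorem then gives, for every $\const{A}\geq 0$, the representation $F(\const{A})=F(0)+F'(0)\,\const{A}+\Delta$ with $\Delta=\int_0^{\const{A}}(\const{A}-t)F''(t)\,\ud t$. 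It thus remains to identify $F(0)$ and $F'(0)$ with the stated constant and first-order coefficient and to bound $\Delta$.

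For the zeroth-order term, $F(0)=G(0,0,0)$ is the trivariate Gaussian orthant probability, which by the classical closed-form formula equals $\tfrac18+\tfrac{1}{4\pi}(\arcsin\varrho_{12}+\arcsin\varrho_{13}+\arcsin\varrho_{23})$, the asserted constant. For the first-order term I would differentiate by the Leibniz rule; because $\const{A}$ enters only through the three lower limits, evaluation at $\const{A}=0$ gives
\[
F'(0)=\alpha\!\int_0^\infty\!\!\int_0^\infty\!\phi_{\vect{0},\mat{K}}(0,y,z)\,\ud y\,\ud z+\beta\!\int_0^\infty\!\!\int_0^\infty\!\phi_{\vect{0},\mat{K}}(x,0,z)\,\ud x\,\ud z+\gamma\!\int_0^\infty\!\!\int_0^\infty\!\phi_{\vect{0},\mat{K}}(x,y,0)\,\ud x\,\ud y.
\]
In the first term I factor $\phi_{\vect{0},\mat{K}}(0,y,z)=\tfrac{1}{\sqrt{2\pi}}\,\phi_{YZ\mid X=0}(y,z)$ (using the unit diagonal, so that the marginal density of the conditioning variable at $0$ is $1/\sqrt{2\pi}$), where the conditional law of $(Y,Z)$ given $X=0$ is zero-mean Gaussian with correlation $\varrho_{23\mid 1}=(\varrho_{23}-\varrho_{12}\varrho_{13})/\sqrt{(1-\varrho_{12}^2)(1-\varrho_{13}^2)}$, the relevant Schur complement. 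The inner double integral is then the bivariate orthant probability $\tfrac14+\tfrac{1}{2\pi}\arcsin(\varrho_{23\mid 1})$ furnished by the $\const{A}=0$ case of Proposition~\ref{prop:binary}. Treating the other two terms symmetrically reproduces precisely the bracketed first-order coefficient, with the three partial correlations in the stated positions.

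For the remainder I would use $|\Delta|\leq \const{A}^2\cdot\tfrac12\sup_{0\le t\le\const{A}}|F''(t)|$ and control $F''$ explicitly. By the chain rule $F''(t)=\alpha^2G_{aa}+\beta^2G_{bb}+\gamma^2G_{cc}+2\alpha\beta G_{ab}+2\alpha\gamma G_{ac}+2\beta\gamma G_{bc}$ evaluated at $(-\alpha t,-\beta t,-\gamma t)$. The mixed second derivatives, e.g. $G_{ab}=\int_c^\infty\phi_{\vect{0},\mat{K}}(a,b,z)\,\ud z$, are bounded by the bivariate marginal density and hence by a constant depending only on $\mat{K}$; the pure second derivatives, e.g. $G_{aa}=\int_b^\infty\!\int_c^\infty(\mat{K}^{-1}(a,y,z)^{\mathsf{T}})_1\,\phi_{\vect{0},\mat{K}}(a,y,z)\,\ud y\,\ud z$, are integrals of an affine function times a Gaussian over half-lines and are therefore bounded by an affine function of $\const{A}(|\alpha|,|\beta|,|\gamma|)$. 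Collecting these, $|F''(t)|$ admits a bound that is a polynomial in $(\const{A},|\alpha|,|\beta|,|\gamma|)$ with $\mat{K}$-dependent coefficients; defining $\eta$ to be one-half of the resulting envelope makes it manifestly finite for finite arguments, a function of the absolute values only, and monotonically increasing, as required. I expect the main obstacle to be exactly this last step: producing a single bound on $F''$ that is valid for all $\const{A}\geq 0$ (not just small $\const{A}$), depends only on $|\alpha|,|\beta|,|\gamma|$, and is monotone — which forces one to track the linear growth of the half-line Gaussian moments in the thresholds while keeping the constants uniform over the admissible range $|\varrho_{ij}|<1$, $\det(\mat{K})>0$.
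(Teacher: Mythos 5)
Your proposal is correct, but it organizes the argument quite differently from the paper. The paper does not Taylor-expand the composite map $\const{A}\mapsto F(\const{A})$ directly; instead it splits the integration region into the positive orthant plus three ``strips'' (e.g.\ $[-\alpha\const{A},0]\times[-\beta\const{A},\infty)\times[-\gamma\const{A},\infty)$), conditions on the strip variable to reduce each strip integral to the bivariate CCDF of Proposition~\ref{prop:binary}, and then propagates the bivariate error term $\Delta(\const{A},\cdot,\cdot)$ together with the Taylor remainder $\delta(\cdot)$ of the $Q$-function through a chain of explicit inequalities (collected in Appendix~\ref{app:trivariate}) to build the $\eta_i$'s by hand. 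Your route---Taylor's theorem with integral remainder applied to $F(\const{A})=G(-\alpha\const{A},-\beta\const{A},-\gamma\const{A})$, Leibniz's rule for $F'(0)$, and a polynomial envelope on $F''$---reaches the same first-order coefficient by essentially the same mechanism (both ultimately evaluate the conditional bivariate orthant probability with the partial correlations $\varrho_{jk|i}$), but it packages the remainder far more systematically: a single bound $|\Delta|\leq\tfrac{\const{A}^2}{2}\sup_{0\leq t\leq\const{A}}|F''(t)|$ replaces the paper's term-by-term bookkeeping, and the required properties of $\eta$ (dependence on $(|\alpha|,|\beta|,|\gamma|)$ only, monotonicity, finiteness) fall out of the fact that your envelope is a polynomial with nonnegative coefficients in $(\const{A},|\alpha|,|\beta|,|\gamma|)$, with $\mat{K}$-dependent constants---exactly the form the paper's $\eta_i$'s take anyway. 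The step you flag as the main obstacle is in fact unproblematic: the mixed second partials $G_{ab}$ are uniformly bounded by the maximum of a bivariate marginal density, and the pure ones involve half-plane integrals of $|\vect{v}^{\mathsf{T}}\mat{K}^{-1}(x,y,z)|\,\phi_{\vect{0},\mat{K}}$, whose growth in the thresholds is at most affine, which is all the monotone envelope needs. What the paper's approach buys is that the intermediate strip computations are reused verbatim in the proof of Theorem~\ref{thm:main} (where the same conditioning appears); what yours buys is brevity and a remainder bound that does not require re-deriving the bivariate expansion inside the trivariate one.
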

\begin{proof}
See Section~\ref{subsub:ternary}.
\end{proof}

\section{Proofs}
\label{sec:proof}
The proof of Theorem~\ref{thm:main} is based on the expansions of the CCDF that were presented in Propositions~\ref{prop:binary} and \ref{prop:ternary}. We derive these expansions in Section~\ref{sub:compl}. The proof of Theorem~\ref{thm:main} is given in Section~\ref{sub:mainproof}.

\subsection{Complementary Cumulative Distribution Functions}
\label{sub:compl}
We shall first evaluate the CCDF for the bivariate case. The result will then be used to solve the trivariate case.

\subsubsection{Bivariate Case}
\label{subsub:binary}
In order to prove Proposition~\ref{prop:binary}, we express the CCDF as
\begin{IEEEeqnarray}{lCl}
\IEEEeqnarraymulticol{3}{l}{\int_{-\alpha\const{A}}^{\infty}\int_{-\beta\const{A}}^{\infty} \phi_{\vect{0},\mat{K}}(x,y)\d y\d x}\nonumber\\
\quad & = & \int_{0}^{\infty}\int_0^{\infty} \phi_{\vect{0},\mat{K}}(x,y)\d y\d x + \int_{-\alpha\const{A}}^{0}\int_{-\beta\const{A}}^{\infty} \phi_{\vect{0},\mat{K}}(x,y)\d y\d x + \int_{0}^{\infty}\int_{-\beta\const{A}}^{0} \phi_{\vect{0},\mat{K}}(x,y)\d y\d x \nonumber\\
& = & \frac{1}{4} + \frac{1}{2\pi}\arcsin(\varrho) +  \int_{-\alpha\const{A}}^{0}\int_{-\beta\const{A}}^{\infty} \phi_{\vect{0},\mat{K}}(x,y)\d y\d x + \int_{0}^{\infty}\int_{-\beta\const{A}}^{0} \phi_{\vect{0},\mat{K}}(x,y)\d y\d x\label{eq:binary_1}
\end{IEEEeqnarray}
where the last step follows from the expression for the orthant probability of a bivariate, zero-mean Gaussian vector with correlation coefficient $\varrho$ \cite{sheppard00}, see also \cite{gupta63}. We proceed by evaluating the integrals on the right-hand side (RHS) of \eqref{eq:binary_1}. To evaluate the first integral, we express the joint PDF $(x,y)\mapsto\phi_{\vect{0},\mat{K}}(x,y)$ as the product of a marginal PDF $f(x)$ and a conditional PDF $f(y|x)$, i.e.,
\begin{equation*}
\phi_{\vect{0},\mat{K}}(x,y) = \phi_{0,1}(x)\phi_{\varrho x, 1-\varrho^2}(y), \qquad \bigl(x\in\Reals,\, y\in\Reals\bigr)
\end{equation*}
where $x\mapsto\phi_{\mu,\sigma^2}(x)$ denotes the PDF of a Gaussian random variable of mean $\mu$ and variance $\sigma^2$. We thus obtain for the first integral on the RHS of \eqref{eq:binary_1}
\begin{IEEEeqnarray}{lCl}
\int_{-\alpha\const{A}}^{0}\int_{-\beta\const{A}}^{\infty} \phi_{\vect{0},\mat{K}}(x,y)\d y\d x & = & \int_{-\alpha\const{A}}^0 \phi_{0,1}(x) \int_{-\beta\const{A}}^{\infty} \phi_{\varrho x, 1-\varrho^2}(y) \d y \d x\nonumber\\
& = & \int_{-\alpha\const{A}}^0 \phi_{0,1}(x)\left[1-Q\left(\frac{\beta\const{A}+\varrho x}{\sqrt{1-\varrho^2}}\right)\right]\d x
\end{IEEEeqnarray}
where $Q(\cdot)$ denotes the $Q$-function
\begin{equation*}
Q(x) \triangleq \frac{1}{\sqrt{2\pi}} \int_{x}^{\infty} \exp\left(-\frac{\xi^2}{2}\right)\d\xi, \qquad x\in\Reals.
\end{equation*}
Expressing $x\mapsto Q(x)$ as a Taylor series around zero yields \cite[(3.54)]{verdu98_book}
\begin{IEEEeqnarray}{lCl}
\int_{-\alpha\const{A}}^{0}\int_{-\beta\const{A}}^{\infty} \phi_{\vect{0},\mat{K}}(x,y)\d y\d x & = &  \int_{-\alpha\const{A}}^0 \phi_{0,1}(x) \left[\frac{1}{2}+\frac{1}{\sqrt{2\pi}}\frac{\beta\const{A}+\varrho x}{\sqrt{1-\varrho^2}}-\delta\left(\frac{\beta\const{A}+\varrho x}{\sqrt{1-\varrho^2}}\right)\right]\d x\nonumber\\
& = & \frac{1}{2}\int_{-\alpha\const{A}}^0 \phi_{0,1}(x) \d x + \frac{1}{2}\delta(\alpha\const{A}) + \Delta_1(\const{A},\alpha,\beta)\label{eq:binary_2}
\end{IEEEeqnarray}
where
\begin{IEEEeqnarray}{lCl}
\Delta_1(\const{A},\alpha,\beta) & \triangleq & \frac{\beta\const{A}}{\sqrt{2\pi(1-\varrho^2)}} \int_{-\alpha\const{A}}^0 \phi_{0,1}(x)\d x + \frac{\varrho}{\sqrt{2\pi(1-\varrho^2)}} \int_{-\alpha\const{A}}^0 \phi_{0,1}(x)x\d x\nonumber\\
& & {}  - \int_{-\alpha\const{A}}^0 \phi_{0,1}(x)\delta\left(\frac{\beta\const{A}+\rho x}{\sqrt{1-\varrho^2}}\right)\d x - \frac{1}{2}\delta(\alpha\const{A})\label{eq:binary_Delta1}
\end{IEEEeqnarray}
and where $\delta(\cdot)$ denotes the remainder term of the Taylor series expansion of $Q(\cdot)$, which satisfies \cite[Sec.~0.317]{gradshteynryzhik00}
\begin{equation}
\label{eq:delta_binary}
|\delta(x)| \leq \frac{|x|^3}{6}\frac{1}{\sqrt{2\pi}},\qquad |x|\leq 1.
\end{equation}
Expressing again $x\mapsto Q(x)$ as a Taylor series around zero, the first two terms on the RHS of \eqref{eq:binary_2} are evaluated as
\begin{IEEEeqnarray}{lCl}
\frac{1}{2}\int_{-\alpha\const{A}}^0 \phi_{0,1}(x) \d x + \frac{1}{2}\delta(\alpha\const{A}) & = & \frac{1}{2}\left(\frac{1}{2}-Q(\alpha\const{A})\right)+\frac{1}{2}\delta{\alpha\const{A}}\nonumber\\
& = & \frac{\alpha}{2}\frac{\const{A}}{\sqrt{2\pi}}. \label{eq:binary_Q}
\end{IEEEeqnarray}
It is shown in Appendix~\ref{app:bivariate} that $\Delta_1(\const{A},\alpha,\beta)$ satisfies
\begin{equation*}
|\Delta_1(\const{A},\alpha,\beta)| \leq \const{A}^2 \eta_1(\const{A},\alpha,\beta)
\end{equation*}
where $\eta_1(\const{A},\alpha,\beta)=\eta_1(\const{A},|\alpha|,|\beta|)$ is monotonically increasing in $(\const{A},|\alpha|,|\beta|)$ and is bounded for every finite $\const{A}$, $\alpha$, and $\beta$.

Along the same lines, we evaluate the second integral on the RHS of \eqref{eq:binary_1} as
\begin{IEEEeqnarray}{lCl}
\int_{0}^{\infty}\int_{-\beta\const{A}}^{0} \phi_{\vect{0},\mat{K}}(x,y)\d y\d x & = & \int_{-\beta\const{A}}^0 \phi_{0,1}(y) \int_0^{\infty} \phi_{\varrho y, 1-\varrho^2}(x)\d x \d y \nonumber\\
& = &  \int_{-\beta\const{A}}^0 \phi_{0,1}(y) \left[1-Q\left(\frac{\varrho y}{\sqrt{1-\varrho^2}}\right)\right]\d y\nonumber\\
& = &  \int_{-\beta\const{A}}^0 \phi_{0,1}(y)\left[\frac{1}{2}+\frac{1}{\sqrt{2\pi}}\frac{\varrho y}{\sqrt{1-\varrho^2}}-\delta\left(\frac{\varrho y}{\sqrt{1-\varrho^2}}\right)\right]\d y \nonumber\\
& = & \frac{1}{2}  \int_{-\beta\const{A}}^0 \phi_{0,1}(y) \d y + \frac{1}{2}\delta(\beta\const{A}) + \Delta_2(\const{A},\alpha,\beta)\nonumber\\
& = & \frac{\beta}{2}\frac{\const{A}}{\sqrt{2\pi}} + \Delta_2(\const{A},\alpha,\beta) \label{eq:binary_3}
\end{IEEEeqnarray}
where
\begin{IEEEeqnarray}{lCl}
\Delta_2(\const{A},\alpha,\beta) \triangleq \frac{\varrho}{\sqrt{2\pi(1-\varrho^2)}}\int_{-\beta\const{A}}^0 \phi_{0,1}(y) y \d y - \int_{-\beta\const{A}}^0 \phi_{0,1}(y) \delta\left(\frac{\varrho y}{\sqrt{1-\varrho^2}}\right)\d y -\frac{1}{2}\delta(\beta\const{A}). \IEEEeqnarraynumspace\label{eq:binary_Delta2}
\end{IEEEeqnarray}
It is shown in Appendix~\ref{app:bivariate} that $\Delta_2(\const{A},\alpha,\beta)$ satisfies
\begin{equation*}
|\Delta_2(\const{A},\alpha,\beta)| \leq \const{A}^2\eta_2(\const{A},\alpha,\beta)
\end{equation*}
where $\eta_2(\const{A},\alpha,\beta)=\eta_2(\const{A},|\alpha|,|\beta|)$ is monotonically increasing in $(\const{A},|\alpha|,|\beta|)$ and is bounded for every finite $\const{A}$, $\alpha$, and $\beta$.

Combining \eqref{eq:binary_1}, \eqref{eq:binary_2}, \eqref{eq:binary_Q}, and \eqref{eq:binary_3} yields
\begin{equation}
\int_{-\alpha\const{A}}^{\infty}\int_{-\beta\const{A}}^{\infty} \phi_{\vect{0},\mat{K}}(x,y)\d y\d x = \frac{1}{4} + \frac{1}{2\pi}\arcsin(\varrho) + \frac{\alpha+\beta}{2}\frac{\const{A}}{\sqrt{2\pi}} + \Delta(\const{A},\alpha,\beta) \label{eq:binary_4}
\end{equation}
where
\begin{equation*}
\Delta(\const{A},\alpha,\beta) \triangleq \Delta_1(\const{A},\alpha,\beta)+\Delta_2(\const{A},\alpha,\beta).
\end{equation*}
By the Triangle Inequality \cite[Sec.~2.4]{lapidoth09}, we have
\begin{IEEEeqnarray}{lCl}
|\Delta(\const{A},\alpha,\beta)| & \leq & |\Delta_1(\const{A},\alpha,\beta)| + |\Delta_2(\const{A},\alpha,\beta)|\nonumber\\
& \leq & \const{A}^2\bigl(\eta_1(\const{A},\alpha,\beta)+\eta_2(\const{A},\alpha,\beta)\bigr). \label{eq:binary_5}
\end{IEEEeqnarray}
Proposition~\ref{prop:binary} follows now from \eqref{eq:binary_4} and \eqref{eq:binary_5} by noting that if $\eta_1(\const{A},\alpha,\beta)$ and $\eta_2(\const{A},\alpha,\beta)$ are both monotonically increasing in $(\const{A},|\alpha|,|\beta|)$, then so is 
\begin{equation*}
\eta(\const{A},\alpha,\beta) \triangleq \eta_1(\const{A},\alpha,\beta)+\eta_2(\const{A},\alpha,\beta);
\end{equation*}
and if $\eta_1(\const{A},\alpha,\beta)$ and $\eta_2(\const{A},\alpha,\beta)$ are bounded, then so is $\eta(\const{A},\alpha,\beta)$.

\subsubsection{Trivariate Case}
\label{subsub:ternary}
In order to prove Proposition~\ref{prop:ternary}, we express the CCDF as
\begin{IEEEeqnarray}{lCl}
\IEEEeqnarraymulticol{3}{l}{\int_{-\alpha\const{A}}^{\infty}\int_{-\beta\const{A}}^{\infty}\int_{-\gamma\const{A}}^{\infty}\phi_{\vect{0},\mat{K}}(x,y,z)\d z\d y \d x}\nonumber\\
\quad & = & \int_{0}^{\infty} \int_0^{\infty} \int_0^{\infty} \phi_{\vect{0},\mat{K}}(x,y,z)\d z\d y \d x + \int_{-\alpha\const{A}}^0\int_{-\beta\const{A}}^{\infty}\int_{-\gamma\const{A}}^{\infty} \phi_{\vect{0},\mat{K}}(x,y,z)\d z\d y\d x\nonumber\\
& & {} + \int_{0}^{\infty}\int_{-\beta\const{A}}^0\int_{-\gamma\const{A}}^{\infty} \phi_{\vect{0},\mat{K}}(x,y,z)\d z\d y\d x + \int_0^{\infty}\int_0^{\infty}\int_{-\gamma\const{A}}^0 \phi_{\vect{0},\mat{K}}(x,y,z)\d z\d y\d x\nonumber\\
& = & \frac{1}{8} + \frac{1}{4\pi}\bigl(\arcsin(\varrho_{12})+\arcsin(\varrho_{13})+\arcsin(\varrho_{23})\bigr) + \int_{-\alpha\const{A}}^0\int_{-\beta\const{A}}^{\infty}\int_{-\gamma\const{A}}^{\infty} \phi_{\vect{0},\mat{K}}(x,y,z)\d z\d y\d x\nonumber\\
& & {} + \int_{0}^{\infty}\int_{-\beta\const{A}}^0\int_{-\gamma\const{A}}^{\infty} \phi_{\vect{0},\mat{K}}(x,y,z)\d z\d y\d x + \int_0^{\infty}\int_0^{\infty}\int_{-\gamma\const{A}}^0 \phi_{\vect{0},\mat{K}}(x,y,z)\d z\d y\d x\label{eq:ternary_1}
\end{IEEEeqnarray}
where the last step follows from the expression for the orthant probability of a trivariate, zero-mean Gaussian vector of covariance matrix \cite{david53} (see also \cite{gupta63})
\begin{equation*}
\mat{K} = \left(\begin{array}{ccc} 1 & \varrho_{12} & \varrho_{13} \\ \varrho_{12} & 1 & \varrho_{23} \\ \varrho_{13} & \varrho_{23} & 1\end{array}\right).
\end{equation*}
If $\const{A}=0$, then the integrals on the RHS of \eqref{eq:ternary_1} are zero and Proposition~\ref{prop:ternary} follows directly from \eqref{eq:ternary_1}. To prove Proposition~\ref{prop:ternary} for $\const{A}>0$, we continue by evaluating the integrals on the RHS of \eqref{eq:ternary_1} separately. To evaluate the first integral, we express the joint PDF $(x,y,z)\mapsto\phi_{\vect{0},\mat{K}}(x,y,z)$ as the product of a marginal PDF $f(x)$ and a conditional PDF $f(y,z|x)$
\begin{equation*}
\phi_{\vect{0},\mat{K}}(x,y,z) = \phi_{0,1}(x)\phi_{\bfmu(x),\mat{K}(x)}(y,z), \qquad \bigl(x\in\Reals,\, y\in\Reals,\, z\in\Reals\bigr)
\end{equation*}
where $(y,z)\mapsto\phi_{\bfmu,\mat{K}}(\cdot)(y,z)$ denotes the PDF of a bivariate Gaussian vector of mean $\bfmu$ and covariance matrix $\mat{K}$, and where
\begin{equation*}
\bfmu(x) = \left(\begin{array}{c}\varrho_{12}x \\ \varrho_{13}x\end{array}\right) \qquad \textnormal{and} \qquad \mat{K}(x) = \left(\begin{array}{cc} 1-\varrho_{12}^2 & \varrho_{23}-\varrho_{12}\varrho_{13} \\ \varrho_{23}-\varrho_{12}\varrho_{13} & 1-\varrho_{13}^2\end{array}\right).
\end{equation*}
We thus have
\begin{IEEEeqnarray}{lCl}
\IEEEeqnarraymulticol{3}{l}{\int_{-\alpha\const{A}}^0\int_{-\beta\const{A}}^{\infty}\int_{-\gamma\const{A}}^{\infty} \phi_{\vect{0},\mat{K}}(x,y,z)\d z\d y\d x}\nonumber\\
\,\, & = & \int_{-\alpha\const{A}}^0\phi_{0,1}(x)\int_{-\beta\const{A}}^{\infty}\int_{-\gamma\const{A}}^{\infty} \phi_{\bfmu(x),\mat{K}(x)}(y,z)\d z\d y\d x\nonumber\\
& = & \int_{-\alpha\const{A}}^0\phi_{0,1}(x)\int_{-\frac{\beta\const{A}+\varrho_{12}x}{\sqrt{1-\varrho_{12}^2}}}^{\infty}\int_{-\frac{\gamma\const{A}+\varrho_{13}x}{\sqrt{1-\varrho_{13}^2}}}^{\infty} \phi_{\vect{0},\mat{K}'(x)}(y',z')\d z'\d y'\d x \nonumber\\
& = &  \int_{-\alpha\const{A}}^0\phi_{0,1}(x)\left[\frac{1}{4}+\frac{1}{2\pi}\arcsin\left(\frac{\varrho_{23}-\varrho_{12}\varrho_{13}}{\sqrt{(1-\varrho_{12}^2)(1-\varrho_{13}^2)}}\right)\right] \d x \nonumber\\
& & {} + \int_{-\alpha\const{A}}^0\phi_{0,1}(x) \frac{1}{2\sqrt{2\pi}}\left(\frac{\beta\const{A}+\varrho_{12}x}{\sqrt{1-\varrho_{12}^2}}+\frac{\gamma\const{A}+\varrho_{13}x}{\sqrt{1-\varrho_{13}^2}}\right)\d x \nonumber\\
& & {} + \int_{-\alpha\const{A}}^0\phi_{0,1}(x) \Delta\left(\const{A},\frac{\beta+\varrho_{12}x/\const{A}}{\sqrt{1-\varrho_{12}^2}},\frac{\gamma+\varrho_{13}x/\const{A}}{\sqrt{1-\varrho_{13}^2}}\right)\d x\nonumber\\
& = &  \left(\int_{-\alpha\const{A}}^0\phi_{0,1}(x)\d x+\delta(\alpha\const{A})\right)\left[\frac{1}{4}+\frac{1}{2\pi}\arcsin\left(\frac{\varrho_{23}-\varrho_{12}\varrho_{13}}{\sqrt{(1-\varrho_{12}^2)(1-\varrho_{13}^2)}}\right)\right] + \Delta_1(\const{A},\alpha,\beta,\gamma)\IEEEeqnarraynumspace\label{eq:ternary_2}
\end{IEEEeqnarray}
where $\Delta(\const{A},\alpha,\beta)$ and $\delta(\cdot)$ are as in the previous section, and where
\begin{equation*}
\mat{K}'(x) = \left(\begin{array}{cc} 1 & \frac{\varrho_{23}-\varrho_{12}\varrho_{13}}{\sqrt{(1-\varrho_{12}^2)(1-\varrho_{13}^2)}} \\ \frac{\varrho_{23}-\varrho_{12}\varrho_{13}}{\sqrt{(1-\varrho_{12}^2)(1-\varrho_{13}^2)}} & 1\end{array}\right)
\end{equation*}
and
\begin{IEEEeqnarray}{lCl}
\Delta_1(\const{A},\alpha,\beta,\gamma) & \triangleq & -\delta(\alpha\const{A})\left[\frac{1}{4}+\frac{1}{2\pi}\arcsin\left(\frac{\varrho_{23}-\varrho_{12}\varrho_{13}}{\sqrt{(1-\varrho_{12}^2)(1-\varrho_{13}^2)}}\right)\right] \nonumber\\
& & {} + \int_{-\alpha\const{A}}^0\phi_{0,1}(x) \frac{1}{2\sqrt{2\pi}}\left(\frac{\beta\const{A}+\varrho_{12}x}{\sqrt{1-\varrho_{12}^2}}+\frac{\gamma\const{A}+\varrho_{13}x}{\sqrt{1-\varrho_{13}^2}}\right)\d x \nonumber\\
& & {} + \int_{-\alpha\const{A}}^0\phi_{0,1}(x) \Delta\left(\const{A},\frac{\beta+\varrho_{12}x/\const{A}}{\sqrt{1-\varrho_{12}^2}},\frac{\gamma+\varrho_{13}x/\const{A}}{\sqrt{1-\varrho_{13}^2}}\right)\d x.\label{eq:ternary_Delta1}
\end{IEEEeqnarray}
Here the second step follows by substituting
\begin{equation*}
y'=\frac{y-\varrho_{12}x}{\sqrt{1-\varrho_{12}^2}} \qquad \textnormal{and} \qquad z'=\frac{z-\varrho_{13}x}{\sqrt{1-\varrho_{13}^2}}
\end{equation*}
and the third step follows from Proposition~\ref{prop:binary}.

By applying \eqref{eq:binary_Q}, we obtain for the first term on the RHS of \eqref{eq:ternary_2}
\begin{IEEEeqnarray}{lCl}
\IEEEeqnarraymulticol{3}{l}{ \left(\int_{-\alpha\const{A}}^0\phi_{0,1}(x)\d x+\delta(\alpha\const{A})\right)\left[\frac{1}{4}+\frac{1}{2\pi}\arcsin\left(\frac{\varrho_{23}-\varrho_{12}\varrho_{13}}{\sqrt{(1-\varrho_{12}^2)(1-\varrho_{13}^2)}}\right)\right]}\nonumber\\
\qquad\qquad\qquad\qquad\qquad\qquad\qquad & = & \frac{\const{A}}{\sqrt{2\pi}}\left[\frac{\alpha}{4}+\frac{\alpha}{2\pi}\arcsin\left(\frac{\varrho_{23}-\varrho_{12}\varrho_{13}}{\sqrt{(1-\varrho_{12}^2)(1-\varrho_{13}^2)}}\right)\right]. \label{eq:ternary_3}
\end{IEEEeqnarray}
It is shown in Appendix~\ref{app:trivariate} that $\Delta_1(\const{A},\alpha,\beta,\gamma)$ satisfies
\begin{equation*}
|\Delta_1(\const{A},\alpha,\beta,\gamma)| \leq \const{A}^2 \eta_1(\const{A},\alpha,\beta,\gamma)
\end{equation*}
where $\eta_1(\const{A},\alpha,\beta,\gamma)=\eta_1(\const{A},|\alpha|,|\beta|,|\gamma|)$ is monotonically increasing in $(\const{A},|\alpha|,|\beta|,|\gamma|)$ and is bounded for every finite $\const{A}$, $\alpha$, $\beta$, and $\gamma$.

To evaluate the second integral on the RHS of \eqref{eq:ternary_1}, we express the joint PDF $(x,y,z)\mapsto\phi_{0,\mat{K}}(x,y,z)$ as
\begin{equation*}
\phi_{0,\mat{K}}(x,y,z) = \phi_{0,1}(y) \phi_{\bfmu(y),\mat{K}(y)}(x,z), \qquad \bigl(x\in\Reals,\, y\in\Reals,\, z\in\Reals\bigr)
\end{equation*}
where
\begin{equation*}
\bfmu(y) = \left(\begin{array}{c} \varrho_{12}y \\ \varrho_{23} y\end{array}\right) \qquad \textnormal{and} \qquad \mat{K}(y) = \left(\begin{array}{cc} 1-\varrho_{12}^2 & \varrho_{13}-\varrho_{12}\varrho_{23} \\ \varrho_{13}-\varrho_{12}\varrho_{23} & 1-\varrho_{23}^2\end{array}\right).
\end{equation*}
We thus have
\begin{IEEEeqnarray}{lCl}
\IEEEeqnarraymulticol{3}{l}{\int_{0}^{\infty}\int_{-\beta\const{A}}^0\int_{-\gamma\const{A}}^{\infty} \phi_{\vect{0},\mat{K}}(x,y,z)\d z\d y\d x}\nonumber\\
\quad & = & \int_{-\beta\const{A}}^0\phi_{0,1}(y) \int_0^{\infty}\int_{-\gamma\const{A}}^{\infty} \phi_{\bfmu(y),\mat{K}(y)}(x,z)\d z\d x \d y\nonumber\\
& = & \int_{-\beta\const{A}}^0\phi_{0,1}(y) \int_{-\frac{\varrho_{12}y}{\sqrt{1-\varrho_{12}^2}}}^{\infty}\int_{-\frac{\gamma\const{A}+\varrho_{23}y}{\sqrt{1-\varrho_{23}^2}}}^{\infty} \phi_{\vect{0},\mat{K}'(y)}(x',z')\d z'\d x' \d y\nonumber\\
& = & \int_{-\beta\const{A}}^0\phi_{0,1}(y) \left[\frac{1}{4} + \frac{1}{2\pi}\arcsin\left(\frac{\varrho_{13}-\varrho_{12}\varrho_{23}}{\sqrt{(1-\varrho_{12}^2)(1-\varrho_{23}^2)}}\right)\right] \d y\nonumber\\
& & {} + \int_{-\beta\const{A}}^0\phi_{0,1}(y) \frac{1}{2\sqrt{2\pi}}\left(\frac{\varrho_{12}y}{\sqrt{1-\varrho_{12}^2}}+\frac{\gamma\const{A}+\varrho_{23}y}{\sqrt{1-\varrho_{23}^2}}\right)\d y\nonumber\\
& & {} + \int_{-\beta\const{A}}^0\phi_{0,1}(y) \Delta\left(\const{A},\frac{\varrho_{12}y/\const{A}}{\sqrt{1-\varrho_{12}^2}},\frac{\gamma+\varrho_{13}y/\const{A}}{\sqrt{1-\varrho_{23}^2}}\right)\d y\nonumber\\
& = &  \left(\int_{-\beta\const{A}}^0\phi_{0,1}(y)\d y + \delta(\beta\const{A}) \right) \left[\frac{1}{4} + \frac{1}{2\pi}\arcsin\left(\frac{\varrho_{13}-\varrho_{12}\varrho_{23}}{\sqrt{(1-\varrho_{12}^2)(1-\varrho_{23}^2)}}\right)\right]  + \Delta_2(\const{A},\alpha,\beta,\gamma)\nonumber\\
& = & \frac{\const{A}}{\sqrt{2\pi}}\left[\frac{\beta}{4} + \frac{\beta}{2\pi}\arcsin\left(\frac{\varrho_{13}-\varrho_{12}\varrho_{23}}{\sqrt{(1-\varrho_{12}^2)(1-\varrho_{23}^2)}}\right)\right] + \Delta_2(\const{A},\alpha,\beta,\gamma) \label{eq:ternary_4}
\end{IEEEeqnarray}
where
\begin{equation*}
\mat{K}'(y) = \left(\begin{array}{cc} 1 & \frac{\varrho_{13}-\varrho_{12}\varrho_{23}}{\sqrt{(1-\varrho_{12}^2)(1-\varrho_{23}^2)}} \\ \frac{\varrho_{13}-\varrho_{12}\varrho_{23}}{\sqrt{(1-\varrho_{12}^2)(1-\varrho_{23}^2)}} & 1 \end{array}\right)
\end{equation*}
and
\begin{IEEEeqnarray}{lCl}
\Delta_2(\const{A},\alpha,\beta,\gamma) & \triangleq & -\delta(\beta\const{A})\left[\frac{1}{4} + \frac{1}{2\pi}\arcsin\left(\frac{\varrho_{13}-\varrho_{12}\varrho_{23}}{\sqrt{(1-\varrho_{12}^2)(1-\varrho_{23}^2)}}\right)\right] \nonumber\\
& & {} +  \int_{-\beta\const{A}}^0\phi_{0,1}(y) \frac{1}{2\sqrt{2\pi}}\left(\frac{\varrho_{12}y}{\sqrt{1-\varrho_{12}^2}}+\frac{\gamma\const{A}+\varrho_{23}y}{\sqrt{1-\varrho_{23}^2}}\right)\d y\nonumber\\
& & {} + \int_{-\beta\const{A}}^0\phi_{0,1}(y) \Delta\left(\const{A},\frac{\varrho_{12}y/\const{A}}{\sqrt{1-\varrho_{12}^2}},\frac{\gamma+\varrho_{23}y/\const{A}}{\sqrt{1-\varrho_{23}^2}}\right)\d y.\label{eq:ternary_Delta2}
\end{IEEEeqnarray}
Here the second step follows by substituting
\begin{equation*}
x'=\frac{x-\varrho_{12}y}{\sqrt{1-\varrho_{12}^2}} \qquad \textnormal{and} \qquad z'=\frac{z-\varrho_{23}y}{\sqrt{1-\varrho_{23}^2}};
\end{equation*}
the third step follows from Proposition~\ref{prop:binary}; and the last step follows from \eqref{eq:binary_Q}. It is shown in Appendix~\ref{app:trivariate} that $\Delta_2(\const{A},\alpha,\beta,\gamma)$ satisfies
\begin{equation*}
|\Delta_2(\const{A},\alpha,\beta,\gamma)| \leq \const{A}^2 \eta_2(\const{A},\alpha,\beta,\gamma)
\end{equation*}
where $\eta_2(\const{A},\alpha,\beta,\gamma)=\eta_2(\const{A},|\alpha|,|\beta|,|\gamma|)$ is monotonically increasing in $(\const{A},|\alpha|,|\beta|,|\gamma|)$ and is bounded for every finite $\const{A}$, $\alpha$, $\beta$, and $\gamma$.

To evaluate the third integral on the RHS of \eqref{eq:ternary_1}, we express the joint PDF $(x,y,z)\mapsto\phi_{0,\mat{K}}(x,y,z)$ as
\begin{equation*}
\phi_{0,\mat{K}}(x,y,z) = \phi_{0,1}(z)\phi_{\bfmu(z),\mat{K}(z)}(x,y), \qquad \bigl(x\in\Reals,\, y\in\Reals,\, z\in\Reals\bigr)
\end{equation*}
where
\begin{equation*}
\bfmu(z) = \left(\begin{array}{c}\varrho_{13}z \\ \varrho_{23}z\end{array}\right) \qquad \textnormal{and} \qquad \mat{K}(z) = \left(\begin{array}{cc} 1-\varrho_{13}^2 & \varrho_{12}-\varrho_{13}\varrho_{23} \\ \varrho_{12}-\varrho_{13}\varrho_{23} & 1\end{array}\right).
\end{equation*}
We have
\begin{IEEEeqnarray}{lCl}
\IEEEeqnarraymulticol{3}{l}{\int_0^{\infty}\int_0^{\infty}\int_{-\gamma\const{A}}^0 \phi_{\vect{0},\mat{K}}(x,y,z)\d z\d y\d x}\nonumber\\
\quad & = & \int_{-\gamma\const{A}}^0 \phi_{0,1}(z) \int_{0}^{\infty} \int_0^{\infty} \phi_{\bfmu(z),\mat{K}(z)}(x,y) \d y\d x \d z \nonumber\\
& = & \int_{-\gamma\const{A}}^0 \phi_{0,1}(z) \int_{-\frac{\varrho_{13}z}{\sqrt{1-\varrho_{13}^2}}}^{\infty} \int_{-\frac{\varrho_{23}z}{\sqrt{1-\varrho_{23}^2}}}^{\infty} \phi_{\vect{0},\mat{K}'(z)}(x',y') \d y'\d x'\d z \nonumber\\
& = & \int_{-\gamma\const{A}}^0 \phi_{0,1}(z)\left[\frac{1}{4}+\frac{1}{2\pi}\arcsin\left(\frac{\varrho_{12}-\varrho_{13}\varrho_{23}}{\sqrt{(1-\varrho_{13}^2)(1-\varrho_{23}^2)}}\right)\right]\d z \nonumber\\
& & {} +  \int_{-\gamma\const{A}}^0 \phi_{0,1}(z) \frac{1}{2\sqrt{2\pi}}\left(\frac{\varrho_{13}z}{\sqrt{1-\varrho_{13}^2}}+\frac{\varrho_{23}z}{\sqrt{1-\varrho_{23}^2}}\right) \d z \nonumber\\
& & {} +  \int_{-\gamma\const{A}}^0 \phi_{0,1}(z)\Delta\left(\const{A},\frac{\varrho_{13}z/\const{A}}{\sqrt{1-\varrho_{13}^2}},\frac{\varrho_{23}z/\const{A}}{\sqrt{1-\varrho_{23}^2}}\right)\d z\nonumber\\
& = &  \left(\int_{-\gamma\const{A}}^0 \phi_{0,1}(z)\d z + \delta(\gamma\const{A})\right) \left[\frac{1}{4}+\frac{1}{2\pi}\arcsin\left(\frac{\varrho_{12}-\varrho_{13}\varrho_{23}}{\sqrt{(1-\varrho_{13}^2)(1-\varrho_{23}^2)}}\right)\right] + \Delta_3(\const{A},\alpha,\beta,\gamma) \nonumber\\
& = & \frac{\const{A}}{\sqrt{2\pi}}\left[\frac{\gamma}{4}+\frac{\gamma}{2\pi}\arcsin\left(\frac{\varrho_{12}-\varrho_{13}\varrho_{23}}{\sqrt{(1-\varrho_{13}^2)(1-\varrho_{23}^2)}}\right)\right] + \Delta_3(\const{A},\alpha,\beta,\gamma) \label{eq:ternary_5}
\end{IEEEeqnarray}
where
\begin{equation*}
\mat{K}'(z) = \left(\begin{array}{cc} 1 & \frac{\varrho_{12}-\varrho_{13}\varrho_{23}}{\sqrt{(1-\varrho_{13}^2)(1-\varrho_{23}^2)}} \\ \frac{\varrho_{12}-\varrho_{13}\varrho_{23}}{\sqrt{(-\varrho_{13}^2)(1-\varrho_{23}^2)}} & 1 \end{array}\right)
\end{equation*}
and
\begin{IEEEeqnarray}{lCl}
\Delta_3(\const{A},\alpha,\beta,\gamma) & \triangleq & -\delta(\gamma\const{A})\left[\frac{1}{4}+\frac{1}{2\pi}\arcsin\left(\frac{\varrho_{12}-\varrho_{13}\varrho_{23}}{\sqrt{(1-\varrho_{13}^2)(1-\varrho_{23}^2)}}\right)\right]\nonumber\\
& & {} +  \int_{-\gamma\const{A}}^0 \phi_{0,1}(z) \frac{1}{2\sqrt{2\pi}}\left(\frac{\varrho_{13}z}{\sqrt{1-\varrho_{13}^2}}+\frac{\varrho_{23}z}{\sqrt{1-\varrho_{23}^2}}\right) \d z \nonumber\\
& & {} +  \int_{-\gamma\const{A}}^0 \phi_{0,1}(z)\Delta\left(\const{A},\frac{\varrho_{13}z/\const{A}}{\sqrt{1-\varrho_{13}^2}},\frac{\varrho_{23}z/\const{A}}{\sqrt{1-\varrho_{23}^2}}\right)\d z.\label{eq:ternary_Delta3}
\end{IEEEeqnarray}
Here the second step follows by substituting
\begin{equation*}
x' = \frac{x-\varrho_{13}z}{\sqrt{1-\varrho_{13}^2}} \qquad \textnormal{and} \qquad y' = \frac{y-\varrho_{23}z}{\sqrt{1-\varrho_{23}^2}};
\end{equation*}
the third step follows from Proposition~\ref{prop:binary}; and the last step follows from \eqref{eq:binary_Q}. It is shown in Appendix~\ref{app:trivariate} that $\Delta_3(\const{A},\alpha,\beta,\gamma)$ satisfies
\begin{equation*}
|\Delta_3(\const{A},\alpha,\beta,\gamma)| \leq \const{A}^2 \eta_3(\const{A},\alpha,\beta,\gamma)
\end{equation*}
where $\eta_3(\const{A},\alpha,\beta,\gamma)=\eta_3(\const{A},|\alpha|,|\beta|,|\gamma|)$ is monotonically increasing in $(\const{A},|\alpha|,|\beta|,|\gamma|)$ and is bounded for every finite $\const{A}$, $\alpha$, $\beta$, and $\gamma$.

Combining \eqref{eq:ternary_1}--\eqref{eq:ternary_5} yields
\begin{IEEEeqnarray}{lCl}
\IEEEeqnarraymulticol{3}{l}{\int_{-\alpha\const{A}}^{\infty}\int_{-\beta\const{A}}^{\infty}\int_{-\gamma\const{A}}^{\infty}\phi_{\vect{0},\mat{K}}(x,y,z)\d z\d y \d x}\nonumber\\
\quad & = & \frac{1}{8} + \frac{1}{4\pi}\bigl(\arcsin(\varrho_{12})+\arcsin(\varrho_{13})+\arcsin(\varrho_{23})\bigr)\nonumber\\
& & {} + \frac{\const{A}}{\sqrt{2\pi}}\left[\frac{\alpha+\beta+\gamma}{4} + \frac{\alpha}{2\pi}\arcsin\left(\frac{\varrho_{23}-\varrho_{12}\varrho_{13}}{\sqrt{(1-\varrho_{12}^2)(1-\varrho_{13}^2)}}\right)\right.\nonumber\\
& & \qquad\qquad\quad {} + \left.\frac{\beta}{2\pi}\arcsin\left(\frac{\varrho_{13}-\varrho_{12}\varrho_{23}}{\sqrt{(1-\varrho_{12}^2)(1-\varrho_{23}^2)}}\right)+\frac{\gamma}{2\pi}\arcsin\left(\frac{\varrho_{12}-\varrho_{13}\varrho_{23}}{\sqrt{(1-\varrho_{13}^2)(1-\varrho_{23}^2)}}\right)\right]\nonumber\\
& & {} + \Delta(\const{A},\alpha,\beta,\gamma) \label{eq:ternary_6}
\end{IEEEeqnarray}
where
\begin{equation*}
\Delta(\const{A},\alpha,\beta,\gamma) \triangleq \Delta_1(\const{A},\alpha,\beta,\gamma)+\Delta_2(\const{A},\alpha,\beta,\gamma) + \Delta_3(\const{A},\alpha,\beta,\gamma).
\end{equation*}
By the Triangle Inequality, we have
\begin{IEEEeqnarray}{lCl}
|\Delta(\const{A},\alpha,\beta,\gamma)| & \leq & |\Delta_1(\const{A},\alpha,\beta,\gamma)| + |\Delta_2(\const{A},\alpha,\beta,\gamma)| + |\Delta_3(\const{A},\alpha,\beta,\gamma)|\nonumber\\
& \leq & \const{A}^2 \bigl(\eta_1(\const{A},\alpha,\beta,\gamma) + \eta_2(\const{A},\alpha,\beta,\gamma) + \eta_3(\const{A},\alpha,\beta,\gamma)\bigr). \label{eq:ternary_7}
\end{IEEEeqnarray}
Proposition~\ref{prop:ternary} follows now from \eqref{eq:ternary_6} and \eqref{eq:ternary_7} by noting that if $\eta_1(\const{A},\alpha,\beta,\gamma)$, $\eta_2(\const{A},\alpha,\beta,\gamma)$, and $\eta_3(\const{A},\alpha,\beta,\gamma)$ are monotonically increasing in $(\const{A},|\alpha|,|\beta|,|\gamma|)$, then so is
\begin{equation*}
\eta(\const{A},\alpha,\beta,\gamma) \triangleq \eta_1(\const{A},\alpha,\beta,\gamma) + \eta_2(\const{A},\alpha,\beta,\gamma) + \eta_3(\const{A},\alpha,\beta,\gamma);
\end{equation*}
and if  $\eta_1(\const{A},\alpha,\beta,\gamma)$, $\eta_2(\const{A},\alpha,\beta,\gamma)$, and $\eta_3(\const{A},\alpha,\beta,\gamma)$ are bounded, then so is $\eta(\const{A},\alpha,\beta,\gamma)$.

\subsection{Proof of Theorem~\ref{thm:main}}
\label{sub:mainproof}
To prove Theorem~\ref{thm:main}, we derive a lower bound on $C_{\frac{1}{4\WW}}(\const{P})$ and compute its ratio to $\const{P}$ in the limit as $\const{P}$ tends to zero. To this end, we evaluate $(2\WW)/n\, I\bigl(X_1^n;\vect{Y}_1^n\bigr)$ for $\{X_k,\,k\in\Integers\}$ being a sequence of IID, binary random variables with
\begin{equation*}
X_k = \left\{\begin{array}{rl} \sqrt{P}, \qquad & \textnormal{with probability $\frac{1}{2}$} \\[5pt] -\sqrt{P}, \qquad & \textnormal{with probability $\frac{1}{2}$.} \end{array}\right.
\end{equation*}
We shall restrict ourselves to waveforms $g(\cdot)$ that satisfy
\begin{IEEEeqnarray}{rCl}
\sum_{\ell\neq 0} \left|g\left(\frac{\ell-1/2}{2\WW}\right)\right| & < & \infty \label{eq:g1}\\
\sum_{\ell\neq 0} \left|g\left(\frac{\ell}{2\WW}\right)\right| & < & \infty \label{eq:g2}\\
\sum_{\ell\neq 0} \left|g\left(\frac{\ell+1/2}{2\WW}\right)\right| & < & \infty. \label{eq:g3}
\end{IEEEeqnarray}
By the chain rule for mutual information \cite[Thm.~2.5.2]{coverthomas91}
\begin{IEEEeqnarray}{lCl}
\frac{2\WW}{n} I\bigl(X_1^n;\vect{Y}_1^n\bigr) & = & \frac{2\WW}{n} \sum_{k=1}^n I\bigl(X_k;\vect{Y}_1^n\bigm| X_1^{k-1}\bigr) \nonumber\\
& \geq & \frac{2\WW}{n} I\bigl(X_k;\vect{Y}_k\bigr)\nonumber\\
& = & \frac{2\WW}{n} \sum_{k=1}^n I\bigl(X_k;Y_{k-\frac{1}{2}},Y_k,Y_{k+\frac{1}{2}}\bigr)\nonumber\\
& = &  2\WW\, I\bigl(X_1;Y_{\frac{1}{2}},Y_1,Y_{\frac{3}{2}}\bigr) \label{eq:main_firststep}
\end{IEEEeqnarray}
where we define
\begin{equation*}
Y_{\tau} \triangleq Y\left(\frac{\tau}{2\WW}\right), \qquad \tau\in\Reals.
\end{equation*}
Here the second step follows because reducing observations cannot increase mutual information and because $\{X_k,\,k\in\Integers\}$ is IID; the third step follows from the definition of $\vect{Y}_k$; and the last step follows because the joint law of $\bigl(X_k,Y_{k-\frac{1}{2}},Y_k,Y_{k+\frac{1}{2}}\bigr)$ does not depend on $k$.

\subsubsection{The Joint Law of $\bigl(X_1,Y_{\frac{1}{2}},Y_1,Y_{\frac{3}{2}}\bigr)$}
In order to evaluate $I\bigl(X_1;Y_{\frac{1}{2}},Y_{1},Y_{\frac{3}{2}}\bigr)$, we shall compute the conditional probability of $\bigl(Y_{\frac{1}{2}},Y_1,Y_{\frac{3}{2}}\bigr)$, conditioned on $X_1$. To this end, we first compute the conditional probability of $\bigl(Y_{\frac{1}{2}},Y_1,Y_{\frac{3}{2}}\bigr)$, conditioned on $X_{-\infty}^{\infty}$, and then average over $(X_{-\infty}^{0},X_2^{\infty})$. 

To compute $\Prob\bigl(Y_{\frac{1}{2}}=1,Y_1=1,Y_{\frac{3}{2}}=1\bigm| X_{-\infty}^{\infty}=x_{-\infty}^{\infty}\bigr)$, we first note that by \eqref{eq:channel}
\begin{equation}
\label{eq:main_1}
Y_{\tau}=1 \qquad \Longleftrightarrow \qquad \frac{1}{\sqrt{2\WW}}\sum_{\ell=-\infty}^{\infty} x_{\ell} \, g\left(\frac{\tau-\ell}{2\WW}\right) + Z_{\tau} \geq 0, \qquad \tau\in\Reals
\end{equation}
where we define
\begin{equation*}
Z_{\tau} \triangleq \bigl(\conv{\vect{Z}}{\textnormal{LPF}_{\WW}}\bigr)\left(\frac{\tau}{2\WW}\right), \qquad \tau\in\Reals.
\end{equation*}
Let
\begin{IEEEeqnarray}{lCl}
\alpha\bigl(x_{-\infty}^{\infty}\bigr) & \triangleq & \frac{1}{\sqrt{\const{P}(2\WW)(\WW\Nzero)}} \sum_{\ell=-\infty}^{\infty} x_{\ell} \, g\left(\frac{1/2-\ell}{2\WW}\right)\nonumber\\
\beta\bigl(x_{-\infty}^{\infty}\bigr) & \triangleq & \frac{1}{\sqrt{\const{P}(2\WW)(\WW\Nzero)}} \sum_{\ell=-\infty}^{\infty} x_{\ell} \, g\left(\frac{1-\ell}{2\WW}\right)\nonumber
\end{IEEEeqnarray}
and
\begin{IEEEeqnarray}{lCl}
\gamma\bigl(x_{-\infty}^{\infty}\bigr) & \triangleq & \frac{1}{\sqrt{\const{P}(2\WW)(\WW\Nzero)}} \sum_{\ell=-\infty}^{\infty} x_{\ell} \, g\left(\frac{3/2-\ell}{2\WW}\right). \nonumber
\end{IEEEeqnarray}
It follows from \eqref{eq:main_1} that
\begin{IEEEeqnarray}{lCl}
\IEEEeqnarraymulticol{3}{l}{\Prob\bigl(Y_{\frac{1}{2}}=1,Y_1=1,Y_{\frac{3}{2}}=1\bigm| X_{-\infty}^{\infty}=x_{-\infty}^{\infty}\bigr)}\nonumber\\
\quad & = & \Prob\Bigl(Z_{\frac{1}{2}}\geq -\alpha\bigl(x_{-\infty}^{\infty}\bigr)\sqrt{\const{P}(\WW\Nzero)}, Z_1\geq -\beta\bigl(x_{-\infty}^{\infty}\bigr)\sqrt{\const{P}(\WW\Nzero)}, Z_{\frac{3}{2}} \geq -\gamma\bigl(x_{-\infty}^{\infty}\bigr)\sqrt{\const{P}(\WW\Nzero)}\Bigr)\nonumber\\
& = & \Prob\left(\frac{1}{\sqrt{\WW\Nzero}}Z_{\frac{1}{2}} \geq -\alpha\bigl(x_{-\infty}^{\infty}\bigr)\sqrt{\const{P}}, \frac{1}{\sqrt{\WW\Nzero}}Z_1\geq -\beta\bigl(x_{-\infty}^{\infty}\bigr)\sqrt{\const{P}}, \frac{1}{\sqrt{\WW\Nzero}}Z_{\frac{3}{2}} \geq -\gamma\bigl(x_{-\infty}^{\infty}\bigr)\sqrt{\const{P}}\right)\nonumber\\
& = & \int_{-\alpha(x_{-\infty}^{\infty})\sqrt{\const{P}}}^{\infty} \int_{-\beta(x_{-\infty}^{\infty})\sqrt{\const{P}}}^{\infty} \int_{-\gamma(x_{-\infty}^{\infty})\sqrt{\const{P}}}^{\infty} \phi_{\vect{0},\mat{K}} (x,y,z) \d z\d y\d x \nonumber\\
&= & \frac{1}{8} + \frac{1}{2\pi} \arcsin(\rho) + \sqrt{\frac{\const{P}}{2\pi}}\Biggl[\frac{\alpha\bigl(x_{-\infty}^{\infty}\bigr)+\beta\bigl(x_{-\infty}^{\infty}\bigr)+\gamma\bigl(x_{-\infty}^{\infty}\bigr)}{4}\nonumber\\
& & \qquad\qquad\qquad\qquad {} +\frac{\alpha\bigl(x_{-\infty}^{\infty}\bigr)+\gamma\bigl(x_{-\infty}^{\infty}\bigr)}{2\pi}\arcsin\left(\frac{\rho}{\sqrt{1-\rho^2}}\right)-\frac{\beta\bigl(x_{-\infty}^{\infty}\bigr)}{2\pi}\arcsin\left(\frac{\rho^2}{1-\rho^2}\right)\Biggr] \nonumber\\
& & {} + \Delta\left(\sqrt{\const{P}},\alpha\bigl(x_{-\infty}^{\infty}\bigr),\beta\bigl(x_{-\infty}^{\infty}\bigr),\gamma\bigl(x_{-\infty}^{\infty}\bigr)\right) \label{eq:main_2}
\end{IEEEeqnarray}
where
\begin{equation*}
\mat{K} = \left(\begin{array}{ccc} 1 & \rho & 0 \\ \rho & 1 & \rho \\ 0 & \rho & 1\end{array}\right), \qquad \rho=\frac{2}{\pi}.
\end{equation*}
Here the third step follows by noting that $\{Z_{\tau},\,\tau\in\Reals\}$ is a zero-mean Gaussian process of autocovariance function $\tau\mapsto\WW\Nzero\sinc(\tau)$; and the last step follows from Proposition~\ref{prop:ternary}. 

Since $\{X_k,\,k\in\Integers\}$ is IID and of zero mean, we have
\begin{IEEEeqnarray}{lCl}
\alpha_0 & \triangleq & \Econdd{\alpha\bigl(X_{-\infty}^{\infty}\bigr)}{X_1=\sqrt{\const{P}}}\nonumber\\
 & = & \frac{1}{\sqrt{(2\WW)(\WW\Nzero)}} g\left(-\frac{1}{4\WW}\right)+\frac{1}{\sqrt{\const{P}(2\WW)(\WW\Nzero)}}\sum_{\ell\neq 1} \E{X_{\ell}} g\left(\frac{1/2-\ell}{2\WW}\right)\nonumber\\
 & = & \frac{1}{\sqrt{(2\WW)(\WW\Nzero)}} g\left(-\frac{1}{4\WW}\right) \label{eq:main_alpha0}
 \end{IEEEeqnarray}
 and
 \begin{IEEEeqnarray}{lClCl}
\beta_0 & \triangleq & \Econdd{\beta\bigl(X_{-\infty}^{\infty}\bigr)}{X_1=\sqrt{\const{P}}} & = & \frac{1}{\sqrt{(2\WW)(\WW\Nzero)}} g\left(0\right)\label{eq:main_beta0}\\
\gamma_0 & \triangleq & \Econdd{\gamma\bigl(X_{-\infty}^{\infty}\bigr)}{X_1=\sqrt{\const{P}}} & = & 	\frac{1}{\sqrt{(2\WW)(\WW\Nzero)}} g\left(\frac{1}{4\WW}\right).\label{eq:main_gamma0}
\end{IEEEeqnarray}
By setting $X_1=\sqrt{\const{P}}$ and averaging over $(X_{-\infty}^0,X_2^{\infty})$, we thus obtain
\begin{IEEEeqnarray}{lCl}
\IEEEeqnarraymulticol{3}{l}{\Prob\bigl(Y_{\frac{1}{2}}=1,Y_1=1,Y_{\frac{3}{2}}=1\bigm| X_1=\sqrt{\const{P}}\bigr)}\nonumber\\
\quad & = & \frac{1}{8} + \frac{1}{2\pi} \arcsin(\rho) + \Expec\Biggl[\sqrt{\frac{\const{P}}{2\pi}}\Biggl[\frac{\alpha\bigl(X_{-\infty}^{\infty}\bigr)+\beta\bigl(X_{-\infty}^{\infty}\bigr)+\gamma\bigl(X_{-\infty}^{\infty}\bigr)}{4} \nonumber\\
& & \quad {}  +\frac{\alpha\bigl(X_{-\infty}^{\infty}\bigr)+\gamma\bigl(X_{-\infty}^{\infty}\bigr)}{2\pi}\arcsin\left(\frac{\rho}{\sqrt{1-\rho^2}}\right)-\frac{\beta\bigl(X_{-\infty}^{\infty}\bigr)}{2\pi}\arcsin\left(\frac{\rho^2}{1-\rho^2}\right)\Biggr] \Biggm| X_1 = \sqrt{\const{P}} \Biggr]\nonumber\\
& & {} + \Econd{\Delta\left(\sqrt{\const{P}},\alpha\bigl(X_{-\infty}^{\infty}\bigr),\beta\bigl(X_{-\infty}^{\infty}\bigr),\gamma\bigl(X_{-\infty}^{\infty}\bigr)\right)}{X_1=\sqrt{\const{P}}}\nonumber\\
& = & \frac{1}{8} + \frac{1}{2\pi} \arcsin(\rho) + \sqrt{\frac{\const{P}}{2\pi}}\Biggl[\frac{\alpha_0+\beta_0+\gamma_0}{4}\nonumber\\
& & \qquad\qquad\qquad\qquad\qquad\qquad {} +\frac{\alpha_0+\gamma_0}{2\pi}\arcsin\left(\frac{\rho}{\sqrt{1-\rho^2}}\right)-\frac{\beta_0}{2\pi}\arcsin\left(\frac{\rho^2}{1-\rho^2}\right)\Biggr] \nonumber\\
& & {} + \Econd{\Delta\left(\sqrt{\const{P}},\alpha\bigl(X_{-\infty}^{\infty}\bigr),\beta\bigl(X_{-\infty}^{\infty}\bigr),\gamma\bigl(X_{-\infty}^{\infty}\bigr)\right)}{X_1=\sqrt{\const{P}}}.\label{eq:main_3}
\end{IEEEeqnarray}
We next show that
\begin{equation}
\Econd{\Delta\left(\sqrt{\const{P}},\alpha\bigl(X_{-\infty}^{\infty}\bigr),\beta\bigl(X_{-\infty}^{\infty}\bigr),\gamma\bigl(X_{-\infty}^{\infty}\bigr)\right)}{X_1=\sqrt{\const{P}}} = o\bigl(\sqrt{\const{P}}\bigr) \label{eq:main_Delta}
\end{equation}
where $x\mapsto o(x)$ satisfies $\lim_{x\downarrow0}o(x)/x=0$. To this end, we use the Triangle Inequality to upper bound
\begin{IEEEeqnarray}{lCl}
\left|\alpha\bigl(x_{-\infty}^{\infty}\bigr)\right| & \leq & \frac{1}{\sqrt{\const{P}(2\WW)(\WW\Nzero)}}\sum_{\ell=-\infty}^{\infty} |x_{\ell}| \left|g\left(\frac{1/2-\ell}{2\WW}\right)\right| \nonumber\\
& = & \frac{1}{\sqrt{(2\WW)(\WW\Nzero)}} \sum_{\ell=-\infty}^{\infty} \left|g\left(\frac{1/2-\ell}{2\WW}\right)\right| \nonumber\\
& \triangleq & \alpha_{\textnormal{max}}\nonumber
\end{IEEEeqnarray}
which, by \eqref{eq:g1}, is finite. Here the second step follows because, for our choice of $\{X_{k},\,k\in\Integers\}$, we have with probability one $|X_k|=\sqrt{\const{P}}$. Similarly, we have
\begin{IEEEeqnarray}{lCl}
\left|\beta\bigl(x_{-\infty}^{\infty}\bigr)\right| & \leq & \frac{1}{\sqrt{(2\WW)(\WW\Nzero)}} \sum_{\ell=-\infty}^{\infty} \left|g\left(\frac{1-\ell}{2\WW}\right)\right| \nonumber\\
& \triangleq & \beta_{\textnormal{max}} \nonumber\\
& < & \infty\nonumber
\end{IEEEeqnarray}
and
\begin{IEEEeqnarray}{lCl}
\left|\gamma\bigl(x_{-\infty}^{\infty}\bigr)\right| & \leq & \frac{1}{\sqrt{(2\WW)(\WW\Nzero)}} \sum_{\ell=-\infty}^{\infty} \left|g\left(\frac{3/2-\ell}{2\WW}\right)\right| \nonumber\\
& \triangleq & \gamma_{\textnormal{max}}\nonumber\\
& < & \infty.\nonumber
\end{IEEEeqnarray}
We thus obtain
\begin{IEEEeqnarray}{lCl}
\IEEEeqnarraymulticol{3}{l}{\left|\Econd{\Delta\left(\sqrt{\const{P}},\alpha\bigl(X_{-\infty}^{\infty}\bigr),\beta\bigl(X_{-\infty}^{\infty}\bigr),\gamma\bigl(X_{-\infty}^{\infty}\bigr)\right)}{X_1=\sqrt{\const{P}}}\right|}\nonumber\\
\quad & \leq & \Econd{\left|\Delta\left(\sqrt{\const{P}},\alpha\bigl(X_{-\infty}^{\infty}\bigr),\beta\bigl(X_{-\infty}^{\infty}\bigr),\gamma\bigl(X_{-\infty}^{\infty}\bigr)\right)\right|}{X_1=\sqrt{\const{P}}}\nonumber\\
& \leq & \const{P} \,\Econd{\eta\left(\sqrt{\const{P}},\alpha\bigl(X_{-\infty}^{\infty}\bigr),\beta\bigl(X_{-\infty}^{\infty}\bigr),\gamma\bigl(X_{-\infty}^{\infty}\bigr)\right)}{X_1=\sqrt{\const{P}}} \nonumber\\
& \leq & \const{P}\, \eta\left(\sqrt{\const{P}},\alpha_{\textnormal{max}},\beta_{\textnormal{max}},\gamma_{\textnormal{max}}\right) \nonumber\\
& = & o\bigl(\sqrt{\const{P}}\bigr) \label{eq:main_step2Delta}
\end{IEEEeqnarray}
where the first step follows from the Triangle Inequality; the second step follows because $\Delta(\const{A},\alpha,\beta,\gamma)\leq\const{A}^2\eta(\const{A},\alpha,\beta,\gamma)$; the third step follows because $\eta(\const{A},\alpha,\beta,\gamma)=\eta(\const{A},|\alpha|,|\beta|,|\gamma|)$ is monotonically increasing in $(\const{A},|\alpha|,|\beta|,|\gamma|)$; and the last step follows because $\eta(\const{A},\alpha,\beta,\gamma)$ is bounded for finite $\const{A}$, $\alpha$, $\beta$, and $\gamma$. This proves \eqref{eq:main_Delta}.

Combining \eqref{eq:main_3} and \eqref{eq:main_Delta} yields
\begin{IEEEeqnarray}{lCl}
\IEEEeqnarraymulticol{3}{l}{\Prob\bigl(Y_{\frac{1}{2}}=1,Y_1=1,Y_{\frac{3}{2}}=1\bigm| X_1=\sqrt{\const{P}}\bigr)}\nonumber\\
\, & = & \frac{1}{8} + \frac{1}{2\pi} \arcsin(\rho)\nonumber\\
& & {}  + \sqrt{\frac{\const{P}}{2\pi}}\Biggl[\frac{\alpha_0+\beta_0+\gamma_0}{4} +\frac{\alpha_0+\gamma_0}{2\pi}\arcsin\left(\frac{\rho}{\sqrt{1-\rho^2}}\right)-\frac{\beta_0}{2\pi}\arcsin\left(\frac{\rho^2}{1-\rho^2}\right)\Biggr] + o\bigl(\sqrt{\const{P}}\bigr).\IEEEeqnarraynumspace\label{eq:main_111|A}
\end{IEEEeqnarray}
By averaging \eqref{eq:main_2} over $X_{-\infty}^{\infty}$, and by noting that
\begin{equation}
\E{\alpha\bigl(X_{-\infty}^{\infty}\bigr)} = \E{\beta\bigl(X_{-\infty}^{\infty}\bigr)} = \E{\gamma\bigl(X_{-\infty}^{\infty}\bigr)} = 0 \label{eq:main_unconditioned}
\end{equation}
and
\begin{equation*}
\E{\Delta\left(\sqrt{\const{P}},\alpha\bigl(X_{-\infty}^{\infty}\bigr),\beta\bigl(X_{-\infty}^{\infty}\bigr),\gamma\bigl(X_{-\infty}^{\infty}\bigr)\right)} = o\bigl(\sqrt{\const{P}}\bigr)
\end{equation*}
we obtain  for the unconditional probability
\begin{IEEEeqnarray}{lCl}
\Prob\bigl(Y_{\frac{1}{2}}=1,Y_1=1,Y_{\frac{3}{2}}=1\bigr) & = & \frac{1}{8} + \frac{1}{2\pi} \arcsin(\rho) + o\bigl(\sqrt{\const{P}}\bigr). \label{eq:main_111}
\end{IEEEeqnarray}
It thus follows from Bayes' law that
\begin{IEEEeqnarray}{lCl}
\IEEEeqnarraymulticol{3}{l}{\Prob\bigl(X_1=\sqrt{\const{P}} \bigm| Y_{\frac{1}{2}}=1,Y_1=1,Y_{\frac{3}{2}}=1\bigr)}\nonumber\\
\quad & = & \frac{1}{2} + \sqrt{\frac{\const{P}}{2\pi}}\frac{\frac{\alpha_0+\beta_0+\gamma_0}{4} +\frac{\alpha_0+\gamma_0}{2\pi}\arcsin\left(\frac{\rho}{\sqrt{1-\rho^2}}\right)-\frac{\beta_0}{2\pi}\arcsin\left(\frac{\rho^2}{1-\rho^2}\right)+o\bigl(\sqrt{\const{P}}\bigr)}{\frac{1}{4} + \frac{1}{\pi} \arcsin(\rho) + o\bigl(\sqrt{\const{P}}\bigr)} \nonumber\\
& = & \frac{1}{2} + \sqrt{\frac{\const{P}}{2\pi}}\frac{\frac{\alpha_0+\beta_0+\gamma_0}{4} +\frac{\alpha_0+\gamma_0}{2\pi}\arcsin\left(\frac{\rho}{\sqrt{1-\rho^2}}\right)-\frac{\beta_0}{2\pi}\arcsin\left(\frac{\rho^2}{1-\rho^2}\right)}{\frac{1}{4} + \frac{1}{\pi} \arcsin(\rho)} + o\bigl(\sqrt{\const{P}}\bigr)\label{eq:main_A|111}
\end{IEEEeqnarray}
where the last step follows because $\frac{\vartheta+o(x)}{\xi+o(x)}=\frac{\vartheta}{\xi}+o(x)$ for any $\xi\neq 0$ and $\vartheta\in\Reals$. Note that the first two terms on the RHS of \eqref{eq:main_A|111} depend on $\ldots,X_{-1},X_{0},X_2,X_3,\ldots$ only via their means. Thus, if $\{X_k,\,k\in\Integers\}$ is of zero mean, then intersymbol interference affects only the $o\bigl(\sqrt{\const{P}}\bigr)$-term, which does not influence our lower bound on the capacity per unit-cost.

The probability $\Prob\bigl(Y_{\frac{1}{2}}=1,Y_1=1,Y_{\frac{3}{2}}=-1\bigm| X_1=\sqrt{\const{P}}\bigr)$ can be computed in a similar way. We have
\begin{IEEEeqnarray}{lCl}
\IEEEeqnarraymulticol{3}{l}{\Prob\bigl(Y_{\frac{1}{2}}=1,Y_1=1,Y_{\frac{3}{2}}=-1\bigm| X_{-\infty}^{\infty}=x_{-\infty}^{\infty}\bigr)}\nonumber\\
\quad & = & \int_{-\alpha(x_{-\infty}^{\infty})\sqrt{\const{P}}}^{\infty} \int_{-\beta(x_{-\infty}^{\infty})\sqrt{\const{P}}}^{\infty} \int_{-\infty}^{-\gamma(x_{-\infty}^{\infty})\sqrt{\const{P}}} \phi_{\vect{0},\mat{K}} (x,y,z) \d z\d y\d x \nonumber\\
& = & \int_{-\alpha(x_{-\infty}^{\infty})\sqrt{\const{P}}}^{\infty} \int_{-\beta(x_{-\infty}^{\infty})\sqrt{\const{P}}}^{\infty} \phi_{\vect{0},\mat{G}} (x,y) \d y\d x \nonumber\\
& & {} - \int_{-\alpha(x_{-\infty}^{\infty})\sqrt{\const{P}}}^{\infty} \int_{-\beta(x_{-\infty}^{\infty})\sqrt{\const{P}}}^{\infty} \int_{-\gamma(x_{-\infty}^{\infty})\sqrt{\const{P}}}^{\infty} \phi_{\vect{0},\mat{K}} (x,y,z) \d z\d y\d x \nonumber\\
& = & \frac{1}{4} + \frac{1}{2\pi}\arcsin(\rho) + \frac{\alpha(x_{-\infty}^{\infty})+\beta(x_{-\infty}^{\infty})}{2}\sqrt{\frac{\const{P}}{2\pi}} + \Delta\left(\sqrt{\const{P}},\alpha(x_{-\infty}^{\infty}),\beta(x_{-\infty}^{\infty})\right) \nonumber\\
& & {} - \frac{1}{8} - \frac{1}{2\pi} \arcsin(\rho) - \sqrt{\frac{\const{P}}{2\pi}}\Biggl[\frac{\alpha\bigl(x_{-\infty}^{\infty}\bigr)+\beta\bigl(x_{-\infty}^{\infty}\bigr)+\gamma\bigl(x_{-\infty}^{\infty}\bigr)}{4}\nonumber\\
& & \qquad\qquad\qquad\qquad {} +\frac{\alpha\bigl(x_{-\infty}^{\infty}\bigr)+\gamma\bigl(x_{-\infty}^{\infty}\bigr)}{2\pi}\arcsin\left(\frac{\rho}{\sqrt{1-\rho^2}}\right)-\frac{\beta\bigl(x_{-\infty}^{\infty}\bigr)}{2\pi}\arcsin\left(\frac{\rho^2}{1-\rho^2}\right)\Biggr] \nonumber\\
& & {} - \Delta\left(\sqrt{\const{P}},\alpha\bigl(x_{-\infty}^{\infty}\bigr),\beta\bigl(x_{-\infty}^{\infty}\bigr),\gamma\bigl(x_{-\infty}^{\infty}\bigr)\right)\nonumber\\
& = & \frac{1}{8} + \sqrt{\frac{\const{P}}{2\pi}}\Biggl[\frac{\alpha\bigl(x_{-\infty}^{\infty}\bigr)+\beta\bigl(x_{-\infty}^{\infty}\bigr)-\gamma\bigl(x_{-\infty}^{\infty}\bigr)}{4}\nonumber\\
& & \qquad\qquad\quad {} -\frac{\alpha\bigl(x_{-\infty}^{\infty}\bigr)+\gamma\bigl(x_{-\infty}^{\infty}\bigr)}{2\pi}\arcsin\left(\frac{\rho}{\sqrt{1-\rho^2}}\right)+\frac{\beta\bigl(x_{-\infty}^{\infty}\bigr)}{2\pi}\arcsin\left(\frac{\rho^2}{1-\rho^2}\right)\Biggr] \nonumber\\
& & {} + \Delta\left(\sqrt{\const{P}},\alpha(x_{-\infty}^{\infty}),\beta(x_{-\infty}^{\infty})\right) - \Delta\left(\sqrt{\const{P}},\alpha\bigl(x_{-\infty}^{\infty}\bigr),\beta\bigl(x_{-\infty}^{\infty}\bigr),\gamma\bigl(x_{-\infty}^{\infty}\bigr)\right)
\end{IEEEeqnarray}
where $\mat{K}$ is the same as in \eqref{eq:main_2}, and where
\begin{equation*}
\mat{G} = \left(\begin{array}{cc} 1 & \rho \\ \rho & 1\end{array}\right), \qquad \rho=\frac{2}{\pi}.
\end{equation*}
Here the second step follows because
\begin{equation*}
\int_{-\infty}^{\infty} \phi_{0,\mat{K}}(x,y,z)\d z = \phi_{0,\mat{G}}(x,y), \qquad \bigl(x\in\Reals,\, y\in\Reals\bigr)
\end{equation*}
and the third step follows from Propositions~\ref{prop:binary} and \ref{prop:ternary}. By setting $X_1=\sqrt{\const{P}}$ and averaging over $\bigl(X_{-\infty}^0,X_2^{\infty}\bigr)$, we obtain
\begin{IEEEeqnarray}{lCl}
\IEEEeqnarraymulticol{3}{l}{\Prob\bigl(Y_{\frac{1}{2}}=1,Y_1=1,Y_{\frac{3}{2}}=-1\bigm| X_1=\sqrt{\const{P}}\bigr)}\nonumber\\
\quad & = & \frac{1}{8} + \sqrt{\frac{\const{P}}{2\pi}}\Biggl[\frac{\alpha_0+\beta_0-\gamma_0}{4}  -\frac{\alpha_0+\gamma_0}{2\pi}\arcsin\left(\frac{\rho}{\sqrt{1-\rho^2}}\right)+\frac{\beta_0}{2\pi}\arcsin\left(\frac{\rho^2}{1-\rho^2}\right)\Biggr] \nonumber\\
& & {} + \Econd{\Delta\left(\sqrt{\const{P}},\alpha(x_{-\infty}^{\infty}),\beta(x_{-\infty}^{\infty})\right)}{X_1=\sqrt{\const{P}}}\nonumber\\
& & {}  - \Econd{\Delta\left(\sqrt{\const{P}},\alpha\bigl(x_{-\infty}^{\infty}\bigr),\beta\bigl(x_{-\infty}^{\infty}\bigr),\gamma\bigl(x_{-\infty}^{\infty}\bigr)\right)}{\sqrt{\const{P}}} \label{eq:main_4}
\end{IEEEeqnarray}
where $\alpha_0$, $\beta_0$, and $\gamma_0$ are defined in \eqref{eq:main_alpha0}--\eqref{eq:main_gamma0}. It was shown above that the last term on the RHS of \eqref{eq:main_4} decays faster than $\sqrt{\const{P}}$ \eqref{eq:main_Delta}. Repeating the steps in \eqref{eq:main_step2Delta}, it can also be shown that
\begin{equation*}
\Econd{\Delta\left(\sqrt{\const{P}},\alpha(x_{-\infty}^{\infty}),\beta(x_{-\infty}^{\infty})\right)}{X_1=\sqrt{\const{P}}} = o\bigl(\sqrt{\const{P}}\bigr).
\end{equation*}
By noting that $o\bigl(\sqrt{\const{P}}\bigr)-o\bigl(\sqrt{\const{P}}\bigr)=o\bigl(\sqrt{\const{P}}\bigr)$, we thus obtain
\begin{IEEEeqnarray}{lCl}
\Prob\bigl(Y_{\frac{1}{2}}=1,Y_1=1,Y_{\frac{3}{2}}=-1\bigm| X_1=\sqrt{\const{P}}\bigr) & = & \frac{1}{8}\nonumber\\
\IEEEeqnarraymulticol{3}{r}{ \,\,\, {} + \sqrt{\frac{\const{P}}{2\pi}}\Biggl[\frac{\alpha_0+\beta_0-\gamma_0}{4}  -\frac{\alpha_0+\gamma_0}{2\pi}\arcsin\left(\frac{\rho}{\sqrt{1-\rho^2}}\right)+\frac{\beta_0}{2\pi}\arcsin\left(\frac{\rho^2}{1-\rho^2}\right)\Biggr]  + o\bigl(\sqrt{\const{P}}\bigr). \label{eq:main_11-1|A}\IEEEeqnarraynumspace}
\end{IEEEeqnarray}
It follows from \eqref{eq:main_unconditioned} that the unconditional probability is given by
\begin{IEEEeqnarray}{lCl}
\Prob\bigl(Y_{\frac{1}{2}}=1,Y_1=1,Y_{\frac{3}{2}}=-1\bigr) & = & \frac{1}{8} + o\bigl(\sqrt{\const{P}}\bigr)\label{eq:main_11-1}.
\end{IEEEeqnarray}
By Bayes' law, we thus have
\begin{IEEEeqnarray}{lCl}
\Prob\bigl(X_1=\sqrt{\const{P}} \bigm| Y_{\frac{1}{2}}=1,Y_1=1,Y_{\frac{3}{2}}=-1\bigr) & = & \frac{1}{2}\nonumber\\
\IEEEeqnarraymulticol{3}{r}{{} + 4\sqrt{\frac{\const{P}}{2\pi}}\Biggl[\frac{\alpha_0+\beta_0-\gamma_0}{4}  -\frac{\alpha_0+\gamma_0}{2\pi}\arcsin\left(\frac{\rho}{\sqrt{1-\rho^2}}\right)+\frac{\beta_0}{2\pi}\arcsin\left(\frac{\rho^2}{1-\rho^2}\right)\Biggr] + o\bigl(\sqrt{\const{P}}\bigr). \IEEEeqnarraynumspace} \label{eq:main_A|11-1}.
\end{IEEEeqnarray}

The other conditional probabilities can be computed along the same lines. The probabilities corresponding to $\bigl(Y_{\frac{1}{2}}=1,Y_1=-1,Y_{\frac{3}{2}}=1\bigr)$ and $\bigl(Y_{\frac{1}{2}}=-1,Y_1=1,Y_{\frac{3}{2}}=1\bigr)$ are given by
\begin{IEEEeqnarray}{lCl}
\IEEEeqnarraymulticol{3}{l}{\Prob\bigl(X_1=\sqrt{\const{P}} \bigm| Y_{\frac{1}{2}}=1,Y_1=-1,Y_{\frac{3}{2}}=1\bigr)}\nonumber\\
\quad & = & \frac{1}{2} + \sqrt{\frac{\const{P}}{2\pi}}\frac{\frac{\alpha_0-\beta_0+\gamma_0}{4}-\frac{\alpha_0+\gamma_0}{2\pi}\arcsin\left(\frac{\rho}{\sqrt{1-\rho^2}}\right)+\frac{\beta_0}{2\pi}\arcsin\left(\frac{\rho^2}{1-\rho^2}\right)}{\frac{1}{4}-\frac{1}{\pi}\arcsin(\rho)} + o\bigl(\sqrt{\const{P}}\bigr)\label{eq:main_A|1-11} 
\end{IEEEeqnarray}
and
\begin{IEEEeqnarray}{lCl}
\Prob\bigl(X_1=\sqrt{\const{P}} \bigm| Y_{\frac{1}{2}}=-1,Y_1=1,Y_{\frac{3}{2}}=1\bigr) & = & \frac{1}{2}\nonumber\\
\IEEEeqnarraymulticol{3}{r}{{} + 4\sqrt{\frac{\const{P}}{2\pi}}\Biggl[\frac{-\alpha_0+\beta_0+\gamma_0}{4}  -\frac{\alpha_0+\gamma_0}{2\pi}\arcsin\left(\frac{\rho}{\sqrt{1-\rho^2}}\right)+\frac{\beta_0}{2\pi}\arcsin\left(\frac{\rho^2}{1-\rho^2}\right)\Biggr] + o\bigl(\sqrt{\const{P}}\bigr). \IEEEeqnarraynumspace} \label{eq:main_A|-111}
\end{IEEEeqnarray}
The remaining probabilities can be computed from \eqref{eq:main_A|111}, \eqref{eq:main_A|11-1}, \eqref{eq:main_A|1-11}, and \eqref{eq:main_A|-111} by noting that, due to the symmetry of $\phi_{\vect{0},\mat{K}}(\cdot)$,
\begin{equation*}
\Prob\bigl(X_1=\sqrt{\const{P}} \bigm| Y_{\frac{1}{2}}=y_{\frac{1}{2}},Y_1=y_1,Y_{\frac{3}{2}}=y_{\frac{3}{2}}\bigr) = 
\Prob\bigl(X_1=-\sqrt{\const{P}} \bigm| Y_{\frac{1}{2}}=-y_{\frac{1}{2}},Y_1=-y_1,Y_{\frac{3}{2}}=-y_{\frac{3}{2}}\bigr).
\end{equation*}

\subsubsection{Evaluating $I\bigl(X_1;Y_{\frac{1}{2}},Y_{1},Y_{\frac{3}{2}}\bigr)$}
Let
\begin{equation*}
\wp\bigl(y_{\frac{1}{2}},y_1,y_{\frac{3}{2}}\bigr) \triangleq \Prob\bigl(X_1=\sqrt{\const{P}}\bigm| Y_{\frac{1}{2}}=y_{\frac{1}{2}}, Y_1=y_1, Y_{\frac{3}{2}}=y_{\frac{3}{2}}\bigr).
\end{equation*}
We express the mutual information $I\bigl(X_1;Y_{\frac{1}{2}},Y_{1},Y_{\frac{3}{2}}\bigr)$ as
\begin{IEEEeqnarray}{lCl}
I\bigl(X_1;Y_{\frac{1}{2}},Y_{1},Y_{\frac{3}{2}}\bigr) & = & H\bigl(X_1\bigr) - H\bigl(X_1\bigm| Y_{\frac{1}{2}},Y_{1},Y_{\frac{3}{2}}\bigr)\nonumber\\
& = & \log 2 - \E{H_b\Bigl(\wp\bigl(Y_{\frac{1}{2}},Y_1,Y_{\frac{3}{2}}\bigr)\Bigr)}
\end{IEEEeqnarray}
where $H_b(p)\triangleq-p\log p-(1-p)\log(1-p)$, $0\leq p\leq 1$ (with $0\log 0 \triangleq 0$) denotes the binary entropy function. To evaluate the binary entropy function, we express $H_b(\cdot)$ as a Taylor series expansion around $\frac{1}{2}$, i.e.,
\begin{equation}
H_b\left(\frac{1}{2}+\xi\right) = \log 2 - 2\xi^2+o\bigl(\xi^2\bigr), \qquad |\xi|\leq\frac{1}{2}. \label{eq:main_TaylorHb}
\end{equation}
Substituting $\xi=\wp\bigl(y_{\frac{1}{2}},y_1,y_{\frac{3}{2}}\bigr)-1/2$ and averaging over $\bigl(Y_{\frac{1}{2}},Y_1,Y_{\frac{3}{2}}\bigr)$ yields thus
\begin{IEEEeqnarray}{lCl}
I\bigl(X_1;Y_{\frac{1}{2}},Y_{1},Y_{\frac{3}{2}}\bigr) & = & \log 2 - \E{\log 2-2\left(\wp\bigl(Y_{\frac{1}{2}},Y_1,Y_{\frac{3}{2}}\bigr)-\frac{1}{2}\right)^2+o\left(\biggl(\wp\bigl(Y_{\frac{1}{2}},Y_1,Y_{\frac{3}{2}}\bigr)-\frac{1}{2}\biggr)^2\right)} \nonumber\\
& = & 2\,\E{\left(\wp\bigl(Y_{\frac{1}{2}},Y_1,Y_{\frac{3}{2}}\bigr)-\frac{1}{2}\right)^2} +\E{o\left(\biggl(\wp\bigl(Y_{\frac{1}{2}},Y_1,Y_{\frac{3}{2}}\bigr)-\frac{1}{2}\biggr)^2\right)} \nonumber\\
& = & 2\,\E{\left(\wp\bigl(Y_{\frac{1}{2}},Y_1,Y_{\frac{3}{2}}\bigr)-\frac{1}{2}\right)^2} + o(\const{P})\label{eq:main_beforeavg}
\end{IEEEeqnarray}
where the last step follows because, by \eqref{eq:main_A|111}, \eqref{eq:main_A|11-1}, \eqref{eq:main_A|1-11}, and \eqref{eq:main_A|-111}, we have for every $\bigl(y_{\frac{1}{2}},y_1,y_{\frac{3}{2}}\bigr)$
\begin{equation*}
\left(\wp\bigl(y_{\frac{1}{2}},y_1,y_{\frac{3}{2}}\bigr) - \frac{1}{2}\right)^2 = \mathcal{O}(\const{P})
\end{equation*}
where $x\mapsto\mathcal{O}(x)$ satisfies $\lim_{x\downarrow 0} \bigl|\mathcal{O}(x)/x\bigr| < \infty$, so
\begin{equation*}
o\left(\biggl(\wp\bigl(y_{\frac{1}{2}},y_1,y_{\frac{3}{2}}\bigr) - \frac{1}{2}\biggr)^2\right) = o(\const{P})
\end{equation*}
which implies that
\begin{equation*}
\E{o\left(\biggl(\wp\bigl(Y_{\frac{1}{2}},Y_1,Y_{\frac{3}{2}}\bigr) - \frac{1}{2}\biggr)^2\right)} = o(\const{P})
\end{equation*}
since the expectation is given by the sum of eight terms that all decay faster than $\const{P}$.

By applying \eqref{eq:main_A|111}, \eqref{eq:main_A|11-1}, \eqref{eq:main_A|1-11}, and \eqref{eq:main_A|-111} to \eqref{eq:main_beforeavg}, we obtain
\begin{IEEEeqnarray}{lCl}
\IEEEeqnarraymulticol{3}{l}{I\bigl(X_1;Y_{\frac{1}{2}},Y_{1},Y_{\frac{3}{2}}\bigr)}\nonumber\\
& = & \frac{\const{P}}{\pi} \left[\frac{\left(\frac{\alpha_0+\beta_0+\gamma_0}{4} +\frac{\alpha_0+\gamma_0}{2\pi}\arcsin\left(\frac{\rho}{\sqrt{1-\rho^2}}\right)-\frac{\beta_0}{2\pi}\arcsin\left(\frac{\rho^2}{1-\rho^2}\right)\right)^2}{\frac{1}{4} + \frac{1}{\pi} \arcsin(\rho)} \right.\nonumber\\
& & \quad {} + 4\,\Biggl(\frac{\alpha_0+\beta_0-\gamma_0}{4}  -\frac{\alpha_0+\gamma_0}{2\pi}\arcsin\left(\frac{\rho}{\sqrt{1-\rho^2}}\right)+\frac{\beta_0}{2\pi}\arcsin\left(\frac{\rho^2}{1-\rho^2}\right)\Biggr)^2 \nonumber\\
& & \quad {} +  \frac{\left(\frac{\alpha_0-\beta_0+\gamma_0}{4}-\frac{\alpha_0+\gamma_0}{2\pi}\arcsin\left(\frac{\rho}{\sqrt{1-\rho^2}}\right)+\frac{\beta_0}{2\pi}\arcsin\left(\frac{\rho^2}{1-\rho^2}\right)\right)^2}{\frac{1}{4}-\frac{1}{\pi}\arcsin(\rho)}\nonumber\\
& & \quad {} + \left.4\,\Biggl(\frac{-\alpha_0+\beta_0+\gamma_0}{4}  -\frac{\alpha_0+\gamma_0}{2\pi}\arcsin\left(\frac{\rho}{\sqrt{1-\rho^2}}\right)+\frac{\beta_0}{2\pi}\arcsin\left(\frac{\rho^2}{1-\rho^2}\right)\Biggr)^2 \vphantom{\frac{\left(\frac{\alpha_0-\beta_0+\gamma_0}{4}-\frac{\alpha_0+\gamma_0}{2\pi}\arcsin\left(\frac{\rho}{\sqrt{1-\rho^2}}\right)+\frac{\beta_0}{2\pi}\arcsin\left(\frac{\rho^2}{1-\rho^2}\right)\right)^2}{\frac{1}{4}-\frac{1}{\pi}\arcsin(\rho)}}\right]+o(\const{P})\IEEEeqnarraynumspace\label{eq:main_LBMI}
\end{IEEEeqnarray}
where $\alpha_0$, $\beta_0$, and $\gamma_0$ are defined in \eqref{eq:main_alpha0}--\eqref{eq:main_gamma0}. 

Combining \eqref{eq:main_LBMI} with \eqref{eq:main_firststep} and \eqref{eq:capacity}, and computing the ratio to $\const{P}$ in the limit as $\const{P}$ tends to zero, yields the lower bound on the capacity per unit-cost
\begin{IEEEeqnarray}{lCl}
\dot{C}_{\frac{1}{4\WW}}(0)
 & \geq & \frac{2\WW}{\pi} \left[\frac{\left(\frac{\alpha_0+\beta_0+\gamma_0}{4} +\frac{\alpha_0+\gamma_0}{2\pi}\arcsin\left(\frac{\rho}{\sqrt{1-\rho^2}}\right)-\frac{\beta_0}{2\pi}\arcsin\left(\frac{\rho^2}{1-\rho^2}\right)\right)^2}{\frac{1}{4} + \frac{1}{\pi} \arcsin(\rho)} \right.\nonumber\\
& &  \qquad {} + 4\,\Biggl(\frac{\alpha_0+\beta_0-\gamma_0}{4}  -\frac{\alpha_0+\gamma_0}{2\pi}\arcsin\left(\frac{\rho}{\sqrt{1-\rho^2}}\right)+\frac{\beta_0}{2\pi}\arcsin\left(\frac{\rho^2}{1-\rho^2}\right)\Biggr)^2 \nonumber\\
& &  \qquad {} +  \frac{\left(\frac{\alpha_0-\beta_0+\gamma_0}{4}-\frac{\alpha_0+\gamma_0}{2\pi}\arcsin\left(\frac{\rho}{\sqrt{1-\rho^2}}\right)+\frac{\beta_0}{2\pi}\arcsin\left(\frac{\rho^2}{1-\rho^2}\right)\right)^2}{\frac{1}{4}-\frac{1}{\pi}\arcsin(\rho)}\nonumber\\
& &  \qquad {} + \left.4\,\Biggl(\frac{-\alpha_0+\beta_0+\gamma_0}{4}  -\frac{\alpha_0+\gamma_0}{2\pi}\arcsin\left(\frac{\rho}{\sqrt{1-\rho^2}}\right)+\frac{\beta_0}{2\pi}\arcsin\left(\frac{\rho^2}{1-\rho^2}\right)\Biggr)^2 \vphantom{\frac{\left(\frac{\alpha_0-\beta_0+\gamma_0}{4}-\frac{\alpha_0+\gamma_0}{2\pi}\arcsin\left(\frac{\rho}{\sqrt{1-\rho^2}}\right)+\frac{\beta_0}{2\pi}\arcsin\left(\frac{\rho^2}{1-\rho^2}\right)\right)^2}{\frac{1}{4}-\frac{1}{\pi}\arcsin(\rho)}}\right]\nonumber\\
& \triangleq & \dot{R}(0).\label{eq:main_LBgeneral}
\end{IEEEeqnarray}
Note that this lower bound holds for all unit-energy waveforms $g(\cdot)$ that are bandlimited to $\WW$ Hz and that satisfy \eqref{eq:g1}--\eqref{eq:g3}. In the following section we evaluate the RHS of \eqref{eq:main_LBgeneral} for a specific choice of $g(\cdot)$.

\subsubsection{Choosing a Waveform}
Any choice of $g(\cdot)$ satisfying the above conditions yields a lower bound on $\dot{C}_{\frac{1}{4\WW}}(0)$. Here we shall find the waveform that gives rise to the largest lower bound. Thus, we wish to maximize the RHS of \eqref{eq:main_LBgeneral} over all unit-energy waveforms $g(\cdot)$ that are bandlimited to $\WW$ Hz and that satisfy \eqref{eq:g1}--\eqref{eq:g3}, namely,
\begin{IEEEeqnarray*}{lCl}
\sum_{\ell\neq 0} \left|g\left(\frac{\ell-1/2}{2\WW}\right)\right| & < & \infty \\
\sum_{\ell\neq 0} \left|g\left(\frac{\ell}{2\WW}\right)\right| & < & \infty \\
\sum_{\ell\neq 0} \left|g\left(\frac{\ell+1/2}{2\WW}\right)\right| & < & \infty.
\end{IEEEeqnarray*}
To further facilitate the optimization problem, we choose $g(\cdot)$ to be a symmetric function, which implies that $\alpha_0=\gamma_0$. (Numerical computation suggests that this choice is indeed optimal.) In this case, the rate per unit-cost $\dot{R}(0)$ is of the form
\begin{equation*}
a\alpha_0^2 + 2b\alpha_0\beta_0 + c\beta_0^2
\end{equation*}
(where $a>0$, $c>0$ and $ac-b^2>0$), so $\dot{R}(0)$ is a convex function of $(\alpha_0,\beta_0)$, and all pairs $(\alpha_0,\beta_0)$ that give rise to the same rate per unit-cost lie on an ellipse. Let $\set{G}$ denote the set of all waveforms satisfying the above conditions, and let $\set{B}$ denote the set of all pairs $(\alpha_0,\beta_0)$ that arise from these waveforms. Since $\dot{R}(0)$ is convex in $(\alpha_0,\beta_0)$, it follows that the supremum of $\dot{R}(0)$ over $\set{B}$ lies on the boundary of $\set{B}$, which we shall determine next.

While the set of all bandlimited functions that satisfy \eqref{eq:g1}--\eqref{eq:g3} is convex, the set of all unit-energy functions is not (see \cite[Sec.~2]{rockafellar70} for a definition of a convex set). Indeed, if for example the functions $g_1(\cdot)$ and $g_2(\cdot)$ are of unit energy and satisfy $g_1(t)=-g_2(t)$, $t\in\Reals$, then
\begin{equation*}
t\mapsto\xi g_1(t)+(1-\xi)g_2(t),\qquad \xi\in[0,1]
\end{equation*}
 is of energy $(2\xi-1)^2$, which is not equal to $1$ unless $\xi=1$ or $\xi=0$. Nevertheless, the set of all functions whose energy is less than or equal to $1$ is convex, since
 \begin{equation*}
 \norm{\xi \vect{g}_1+(1-\xi)\vect{g}_2}^2 \leq \bigl(\xi\norm{\vect{g}_1}+(1-\xi)\norm{\vect{g}_2}\bigr)^2 \leq 1, \qquad \bigl(\norm{\vect{g}_1}^2\leq 1, \norm{\vect{g}_2}^2\leq 1, \xi\in[0,1]\bigr)
 \end{equation*}
where $\|\vect{g}\|\triangleq\sqrt{\int \bigl(g(t)\bigr)^2\d t}$, and where the inequality follows from the Triangle Inequality. Let $\set{G}'\supseteq \set{G}$ denote the set of all waveforms that are bandlimited to $\WW$ Hz, that satisfy \eqref{eq:g1}--\eqref{eq:g3}, and whose energy is less than or equal to $1$. Furthermore, let $\set{B}'\supseteq\set{B}$ denote the set of all pairs $(\alpha_0,\beta_0)$ that arise from these waveforms. Since convexity is preserved under intersection \cite[Thm.~2.1]{rockafellar70}, it follows that the set $\set{G}'$ (and hence also $\set{B}'$) is convex. We further have that
\begin{equation*}
\sup_{\set{G}} \dot{R}(0) = \sup_{\set{G}'} \dot{R}(0) = \sup_{\set{B}'}\dot{R}(0)
\end{equation*}
where the first step follows because every rate per unit-cost $\dot{R}(0)$ that corresponds to a waveform satisfying $\|\vect{g}\|<1$ can be increased by normalizing the waveform; and where the second step follows because $\dot{R}(0)$ depends on $g(\cdot)$ only via $(\alpha_0,\beta_0)$ and because $\set{G}'$ determines $\set{B}'$. The above optimization problem can thus be expressed as the maximization of a convex function over a convex set.

Let $\overline{\set{B}'}$ denote the closure of $\set{B}'$ (i.e., let $\overline{\set{B}'}$ be the smallest closed set containing $\set{B}'$) and let $\lambda\mapsto\zeta(\lambda)$ be defined as
\begin{equation}
\label{eq:main_support}
\zeta(\lambda) \triangleq \sup_{(\alpha_0,\beta_0)\in\set{B}'} \bigl| \lambda\,\alpha_0+\beta_0\bigr|, \qquad \lambda\in\Reals.
\end{equation}
Since $\overline{\set{B}'}$ is a closed convex set, it follows that the boundary of $\set{B}'$ is given by the points $(\alpha_0,\beta_0)\in\overline{\set{B}'}$ that achieve the supremum in \eqref{eq:main_support} \cite[Sec.~13]{rockafellar70}. Note that the boundary of $\set{B}'$ is symmetric with respect to the origin, since $(\alpha_0,\beta_0)$ and $(-\alpha_0,-\beta_0)$ yield the same value for $\bigl|\lambda\,\alpha_0+\beta_0\bigr|$, $\lambda\in\Reals$.

For every $\lambda\in\Reals$, the supremum in \eqref{eq:main_support} is achieved by the pair $(\alpha_0,\beta_0)$ that corresponds to $g(\cdot)$ being of Fourier Transform
\begin{equation}
\label{eq:main_spectrum}
\hat{g}(f) = \frac{1}{\sqrt{2\WW}} \frac{1+\lambda\cos\left(\pi\frac{f}{2\WW}\right)}{\sqrt{\frac{1}{2}\lambda^2+\frac{4}{\pi}\lambda+1}}\I{|f|\leq\WW}, \qquad f\in\Reals
\end{equation}
where $\I{\cdot}$ denotes the indicator function, i.e., $\I{\textnormal{statement}}$ is $1$ if the statement is true and $0$ otherwise. This yields
\begin{equation}
\label{eq:main_gopt1}
\alpha_0 = \frac{1}{\sqrt{\WW\Nzero}}\frac{\frac{2}{\pi}+\frac{1}{2}\lambda}{\sqrt{\frac{1}{2}\lambda^2+\frac{4}{\pi}\lambda+1}},\qquad \lambda\in\Reals
\end{equation}
and
\begin{equation}
\label{eq:main_gopt2}
\beta_0 = \frac{1}{\sqrt{\WW\Nzero}}\frac{1+\frac{2}{\pi}\lambda}{\sqrt{\frac{1}{2}\lambda^2+\frac{4}{\pi}\lambda+1}}, \qquad \lambda\in\Reals.
\end{equation}
Note that the waveform given by \eqref{eq:main_spectrum} is of unit energy and is bandlimited to $\WW$ Hz, but it does not satisfy \eqref{eq:g1}--\eqref{eq:g3}. Nevertheless, it is shown in Appendix~\ref{app:mainproof} that there exist pairs $(\alpha_0,\beta_0)\in\set{B}'$ that are arbitrarily close to \eqref{eq:main_gopt1} and \eqref{eq:main_gopt2}. The pairs $(\alpha_0,\beta_0)$ given by \eqref{eq:main_gopt1} and \eqref{eq:main_gopt2} are thus in the closure of $\set{B}'$ and the boundary of $\set{B}'$ is hence parameterized by \eqref{eq:main_gopt1} and \eqref{eq:main_gopt2}. (Notice that \eqref{eq:main_gopt1} and \eqref{eq:main_gopt2} describe only one half of the boundary of $\set{B}'$. The other half is given by $(-\alpha_0,-\beta_0)$.)

\begin{figure}
\centering
\psfrag{feasible region}[lt][lt]{\footnotesize boundary of $\set{B}'$}
\psfrag{I = 0.747}[cb][cb]{\footnotesize $\dot{R}(0)\approx 0.747$}
\psfrag{lambda=1.4}[cb][cb]{$\lambda=1.4$}
\psfrag{alpha}[cc][cc]{$\alpha_0$}
\psfrag{beta}[cc][cc]{$\beta_0$}
\epsfig{file=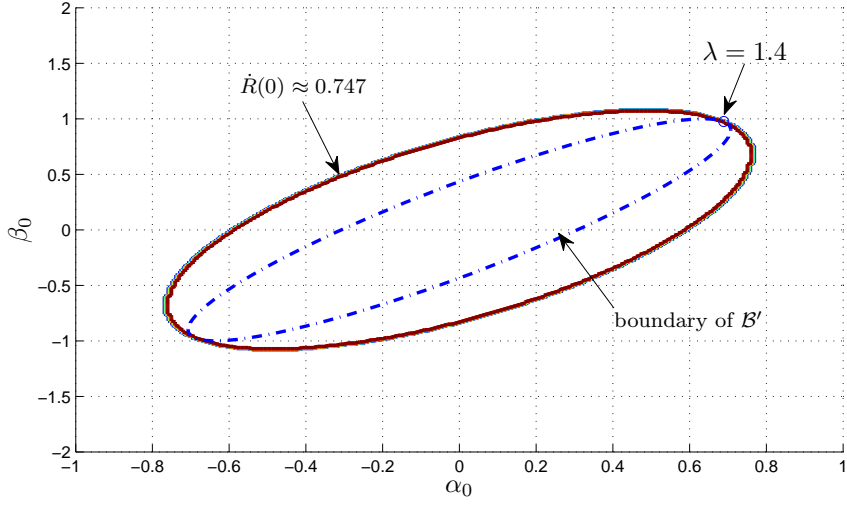, width=0.9\textwidth}
 \caption{The contour line corresponding to $\dot{R}(0)\approx 0.747$; the boundary of $\set{B}'$ as parameterized by \eqref{eq:main_gopt1} and \eqref{eq:main_gopt2}; and the point $(\alpha_0,\beta_0)$ that corresponds to $\lambda=1.4$.}
\label{fig2}
\end{figure}

The above problem is illustrated in Figure~\ref{fig2} for $\WW=\Nzero=1$. The outer curve (solid line) is the contour line corresponding to $\dot{R}(0)\approx 0.747$. The inner curve (dot-dashed line) depicts the boundary of $\set{B}'$ as parameterized by \eqref{eq:main_gopt1} and \eqref{eq:main_gopt2}. The two curves touch at two points: one corresponds to $\lambda=1.4$ and the other is the same point reflected through the origin. Since $\dot{R}(0)$ is convex in $(\alpha_0,\beta_0)$, it follows that any point that lies inside the depicted contour line (solid line) yields a rate per unit-cost that is smaller than $0.747$. Thus, the two points on the boundary of $\set{B}'$ that touch the contour line yield $\dot{R}(0)\approx 0.474$, whereas the other points $(\alpha_0,\beta_0)\in\set{B}'$ yield a rate per unit-cost that is smaller. Therefore, we conclude that choosing $(\alpha_0,\beta_0)$ according to \eqref{eq:main_gopt1} and \eqref{eq:main_gopt2} with $\lambda=1.4$ maximizes $\dot{R}(0)$.

In the following, we summarize the above arguments to compute the supremum of $\dot{R}(0)$ over the set $\set{B}'$ for general $\WW$ and $\Nzero$. To this end, we first recall that $\dot{R}(0)$ is convex in $(\alpha_0,\beta_0)$ and it therefore suffices to maximize $\dot{R}(0)$ over all boundary points of $\set{B}'$. We next use that every boundary point of $\set{B}'$ is given by the parametric equations \eqref{eq:main_gopt1} and \eqref{eq:main_gopt2}. Therefore, by applying \eqref{eq:main_gopt1} and \eqref{eq:main_gopt2} to \eqref{eq:main_LBgeneral}, the maximization of $\dot{R}(0)$ over the set $\set{B}'$ can be expressed as the maximization over the real scalar $\lambda$
\begin{IEEEeqnarray}{lCl}
\dot{C}_{\frac{1}{4\WW}}(0) & \geq &  \sup_{\lambda\in\Reals}\frac{2}{\pi}\frac{1}{\Nzero} \left[\frac{\left(\frac{\tilde{\alpha}(\lambda)}{2}+\frac{\tilde{\beta}(\lambda)}{4} +\frac{\tilde{\alpha}(\lambda)}{\pi}\arcsin\left(\frac{\rho}{\sqrt{1-\rho^2}}\right)-\frac{\tilde{\beta}(\lambda)}{2\pi}\arcsin\left(\frac{\rho^2}{1-\rho^2}\right)\right)^2}{\frac{1}{4} + \frac{1}{\pi} \arcsin(\rho)} \right.\nonumber\\
& &  \quad\qquad\qquad {} + 8\,\Biggl(\frac{\tilde{\beta}(\lambda)}{4}  -\frac{\tilde{\alpha}(\lambda)}{\pi}\arcsin\left(\frac{\rho}{\sqrt{1-\rho^2}}\right)+\frac{\tilde{\beta}(\lambda)}{2\pi}\arcsin\left(\frac{\rho^2}{1-\rho^2}\right)\Biggr)^2 \nonumber\\
& &  \quad\qquad\qquad {} +  \left.\frac{\left(\frac{\tilde{\alpha}(\lambda)}{2}-\frac{\tilde{\beta}(\lambda)}{4}-\frac{\tilde{\alpha}(\lambda)}{\pi}\arcsin\left(\frac{\rho}{\sqrt{1-\rho^2}}\right)+\frac{\tilde{\beta}(\lambda)}{2\pi}\arcsin\left(\frac{\rho^2}{1-\rho^2}\right)\right)^2}{\frac{1}{4}-\frac{1}{\pi}\arcsin(\rho)}\quad\,\,\right]\IEEEeqnarraynumspace\label{eq:main_beforeend}
\end{IEEEeqnarray}
where
\begin{equation*}
\tilde{\alpha}(\lambda) \triangleq \frac{\frac{2}{\pi}+\frac{1}{2}\lambda}{\sqrt{\frac{1}{2}\lambda^2+\frac{4}{\pi}\lambda+1}}, \qquad \lambda\in\Reals 
\end{equation*}
and
\begin{equation*}
\tilde{\beta}(\lambda) \triangleq  \frac{1+\frac{2}{\pi}\lambda}{\sqrt{\frac{1}{2}\lambda^2+\frac{4}{\pi}\lambda+1}},\qquad\lambda\in\Reals. 
\end{equation*}
Numerical computation shows that the supremum on the RHS of \eqref{eq:main_beforeend} is achieved for $\lambda=1.4$, which is consistent with our interpretation of Figure~\ref{fig2}. The capacity per unit-cost $\dot{C}_{\frac{1}{4\WW}}(0)$ is thus lower bounded
\begin{IEEEeqnarray}{lCl}
\dot{C}(0) & \geq & \left[\frac{\left(\frac{g_1}{2}+\frac{g_0}{4} +\frac{g_1}{\pi}\arcsin\left(\frac{\rho}{\sqrt{1-\rho^2}}\right)-\frac{g_0}{2\pi}\arcsin\left(\frac{\rho^2}{1-\rho^2}\right)\right)^2}{\frac{1}{4} + \frac{1}{\pi} \arcsin(\rho)} \right.\nonumber\\
& &  \quad {} + 8\,\Biggl(\frac{g_0}{4}  -\frac{g_1}{\pi}\arcsin\left(\frac{\rho}{\sqrt{1-\rho^2}}\right)+\frac{g_0}{2\pi}\arcsin\left(\frac{\rho^2}{1-\rho^2}\right)\Biggr)^2 \nonumber\\
& &  \quad {} +  \left.\frac{\left(\frac{g_1}{2}-\frac{g_0}{4}-\frac{g_1}{\pi}\arcsin\left(\frac{\rho}{\sqrt{1-\rho^2}}\right)+\frac{g_0}{2\pi}\arcsin\left(\frac{\rho^2}{1-\rho^2}\right)\right)^2}{\frac{1}{4}-\frac{1}{\pi}\arcsin(\rho)}\quad\,\,\right]\nonumber\\
& \approx & 0.747\frac{1}{\Nzero}
\end{IEEEeqnarray}
where
\begin{equation*}
g_1 = \left.\frac{\frac{2}{\pi}+\frac{1}{2}\lambda}{\sqrt{\frac{1}{2}\lambda^2+\frac{4}{\pi}\lambda+1}}\right|_{\lambda=1.4} \qquad \textnormal{and} \qquad g_0  \left.\frac{1+\frac{2}{\pi}\lambda}{\sqrt{\frac{1}{2}\lambda^2+\frac{4}{\pi}\lambda+1}}\right|_{\lambda=1.4}.
\end{equation*}
This proves Theorem~\ref{thm:main}.

\section{Summary and Conclusion}
\label{sec:summary}
We demonstrated that doubling the sampling rate recovers some of the loss in capacity per unit-cost incurred on the bandlimited Gaussian channel with a one-bit output quantizer. Indeed, when the channel output is sampled at Nyquist rate $2\WW$, it is well-known that the capacity per unit-cost is given by $\frac{2}{\pi}\frac{1}{\Nzero}\approx 0.64\frac{1}{\Nzero}$ \cite{viterbiomura79}, which is a factor of $\frac{2}{\pi}$ smaller than the capacity per unit-cost of the same channel but without output quantizer. We showed that, by sampling the output at twice the Nyquist rate, a capacity per unit-cost not less than $0.75\frac{1}{\Nzero}$ can be achieved. This can be viewed as a very-noisy counterpart of the work by Gilbert \cite{gilbert93} and by Shamai \cite{shamai94}, which demonstrated that oversampling increases the capacity of the above channel when there is no additive noise.

The conclusions that can be drawn from this result are twofold. Firstly, we demonstrated that in order to reduce the loss in capacity per unit-cost caused by the quantization, one can either increase the quantization resolution or the sampling rate. Thus, it is possible to trade amplitude resolution (quantization) versus time resolution (sampling rate). Secondly, we observe that while sampling the output at Nyquist rate is optimal for the AWGN channel (without output quantization), this does not hold when the output is quantized. Thus, a communication scheme that is optimal in the sense that it achieves the capacity need not be optimal anymore if the channel output is processed by a noninvertible operation (such as quantization).

\section*{Acknowledgment}
Fruitful discussions with Mich\`ele Wigger are gratefully acknowledged.

\appendix

\section{Appendix to Section~\ref{sub:compl}}
\label{app:compl}

\subsection{Bivariate Case}
\label{app:bivariate}
We show that, for $\const{A}\geq 0$, $\alpha\in\Reals$, and $\beta\in\Reals$, we have
\begin{equation*}
|\Delta_1(\const{A},\alpha,\beta)| \leq \const{A}^2\eta_1(\const{A},\alpha,\beta) \qquad \textnormal{and} \qquad
|\Delta_2(\const{A},\alpha,\beta)| \leq \const{A}^2\eta_2(\const{A},\alpha,\beta) 
\end{equation*}
where $\Delta_1(\const{A},\alpha,\beta)$ and $\Delta_2(\const{A},\alpha,\beta)$ are defined in \eqref{eq:binary_Delta1} and \eqref{eq:binary_Delta2}, and where $\eta_1(\const{A},\alpha,\beta)=\eta_1(\const{A},|\alpha|,|\beta|)$ and $\eta_2(\const{A},\alpha,\beta)=\eta_2(\const{A},|\alpha|,|\beta|)$ are monotonically increasing in $(\const{A},|\alpha|,|\beta|)$ and are bounded for every finite $\const{A}$, $\alpha$, and $\beta$. We have
\begin{IEEEeqnarray}{lCl}
\IEEEeqnarraymulticol{3}{l}{|\Delta_1(\const{A},\alpha,\beta)|}\nonumber\\
\quad  & \leq & \frac{|\beta|\const{A}}{\sqrt{2\pi(1-\varrho^2)}} \left|\int_{-\alpha\const{A}}^0 \phi_{0,1}(x)\d x\right| + \frac{|\varrho|}{\sqrt{2\pi(1-\varrho^2)}}\left|\int_{-\alpha\const{A}}^0 \phi_{0,1}(x)|x|\d x\right|\nonumber\\
& & {}  + \left|\int_{-\alpha\const{A}}^0 \phi_{0,1}(x)\left|\delta\left(\frac{\beta\const{A}+\rho x}{\sqrt{1-\varrho^2}}\right)\right|\d x\right| + \frac{1}{2}\bigl|\delta(\alpha\const{A})\bigr| \nonumber\\
& \leq & \frac{|\alpha|\,|\beta|}{\sqrt{2\pi(1-\varrho^2)}}\const{A}^2+\frac{|\varrho|}{\sqrt{2\pi(1-\varrho^2)}}\int^{|\alpha|\const{A}}_0 x\d x + \left|\int^{-\alpha\const{A}}_0 \frac{|\beta\const{A}+\rho x|^3}{6\sqrt{2\pi}(1-\varrho^2)^{\frac{3}{2}}}\d x\right| + \frac{|\alpha|^3\const{A}^3}{12\sqrt{2\pi}} \nonumber\\
& \leq & \frac{|\alpha|\,|\beta|}{\sqrt{2\pi(1-\varrho^2)}}\const{A}^2+\frac{|\varrho|\alpha^2}{\sqrt{2\pi(1-\varrho^2)}}\const{A}^2   + \frac{|\alpha|\bigl(|\beta|+|\rho|\,|\alpha|\bigr)^3}{6\sqrt{2\pi}(1-\varrho^2)^{\frac{3}{2}}}\const{A}^4 + \frac{|\alpha|^3}{12\sqrt{2\pi}}\const{A}^3 \nonumber\\
& = & \const{A}^2 \eta_1(\const{A},\alpha,\beta)
\end{IEEEeqnarray}
where
\begin{equation*}
\eta_1(\const{A},\alpha,\beta) \triangleq \frac{|\alpha|\,|\beta|}{\sqrt{2\pi(1-\varrho^2)}} + \frac{|\varrho|\alpha^2}{\sqrt{2\pi(1-\varrho^2)}} + \frac{|\alpha|\bigl(|\beta|+|\rho|\,|\alpha|\bigr)^3}{6\sqrt{2\pi}(1-\varrho^2)^{\frac{3}{2}}}\const{A}^2 + \frac{|\alpha|^3}{12\sqrt{2\pi}}\const{A}.
\end{equation*}
Here the first step follows from the Triangle Inequality; the second step follows by upper bounding $\phi_{0,1}(x) \leq 1$, $x\in\Reals$ and from \eqref{eq:delta_binary}; and the third step follows because, over the range of integration, $x\leq|\alpha|\const{A}$ and $|\beta\const{A}+\varrho x|\leq \bigl(|\beta|+|\varrho|\,|\alpha|\bigr)\const{A}$.

Along the same lines, we obtain
\begin{IEEEeqnarray}{lCl}
\IEEEeqnarraymulticol{3}{l}{|\Delta_2(\const{A},\alpha,\beta)|}\nonumber\\
\quad & \leq & \frac{|\varrho|}{\sqrt{2\pi(1-\varrho^2)}}\left|\int_{-\beta\const{A}}^0 \phi_{0,1}(y) |y| \d y\right| + \left|\int_{-\beta\const{A}}^0 \phi_{0,1}(y) \left|\delta\left(\frac{\varrho y}{\sqrt{1-\varrho^2}}\right)\right|\d y\right| + \frac{1}{2}\bigl|\delta(\beta\const{A})\bigr| \nonumber\\
& \leq & \frac{|\varrho|}{\sqrt{2\pi(1-\varrho^2)}}\int^{|\beta|\const{A}}_0 y \d y + \int^{|\beta|\const{A}}_0 \frac{\bigl(|\varrho| y\bigr)^3}{6\sqrt{2\pi}(1-\varrho^2)^{\frac{3}{2}}}\d y + \frac{|\beta|^3}{12\sqrt{2\pi}}\const{A}^3\nonumber\\
& = & \const{A}^2 \eta_2(\const{A},\alpha,\beta)
\end{IEEEeqnarray}
where
\begin{equation*}
\eta_2(\const{A},\alpha,\beta) \triangleq \frac{|\varrho|\beta^2}{\sqrt{2\pi(1-\varrho^2)}} + \frac{|\varrho|^3\beta^4}{6\sqrt{2\pi}(1-\varrho^2)^{\frac{3}{2}}}\const{A}^2 + \frac{|\beta|^3}{12\sqrt{2\pi}}\const{A}.
\end{equation*}
This proves the claim.

\subsection{Trivariate Case}
\label{app:trivariate}
We show that, for $\const{A}>0$, $\alpha\in\Reals$, $\beta\in\Reals$, and $\gamma\in\Reals$, we have
\begin{IEEEeqnarray*}{lCl}
|\Delta_1(\const{A},\alpha,\beta,\gamma)| & \leq & \const{A}^2\eta_1(\const{A},\alpha,\beta,\gamma)\\
|\Delta_2(\const{A},\alpha,\beta,\gamma)| & \leq & \const{A}^2\eta_2(\const{A},\alpha,\beta,\gamma)
\end{IEEEeqnarray*}
and
\begin{IEEEeqnarray*}{lCl}
|\Delta_3(\const{A},\alpha,\beta,\gamma)| \leq \const{A}^2\eta_3(\const{A},\alpha,\beta,\gamma)
\end{IEEEeqnarray*}
where $\Delta_1(\const{A},\alpha,\beta,\gamma)$, $\Delta_2(\const{A},\alpha,\beta,\gamma)$, and $\Delta_3(\const{A},\alpha,\beta,\gamma)$ are defined in \eqref{eq:ternary_Delta1}, \eqref{eq:ternary_Delta2}, and \eqref{eq:ternary_Delta3}, and where $\eta_1(\const{A},\alpha,\beta,\gamma)=\eta_1(\const{A},|\alpha|,|\beta|,|\gamma|)$, $\eta_2(\const{A},\alpha,\beta,\gamma)=\eta_2(\const{A},|\alpha|,|\beta|,|\gamma|)$, and $\eta_3(\const{A},\alpha,\beta,\gamma)=\eta_3(\const{A},|\alpha|,|\beta|,|\gamma|)$ are monotonically increasing in $(\const{A},|\alpha|,|\beta|,|\gamma|)$ and are bounded for every finite $\const{A}$, $\alpha$, $\beta$, and $\gamma$. We have
\begin{IEEEeqnarray}{lCl}
|\Delta_1(\const{A},\alpha,\beta,\gamma)| & \leq & |\delta(\alpha\const{A})|\left|\frac{1}{4}+\frac{1}{2\pi}\arcsin\left(\frac{\varrho_{23}-\varrho_{12}\varrho_{13}}{\sqrt{(1-\varrho_{12}^2)(1-\varrho_{13}^2)}}\right)\right| \nonumber\\
& & {} + \left|\int_{-\alpha\const{A}}^0\phi_{0,1}(x) \frac{1}{2\sqrt{2\pi}}\left|\frac{\beta\const{A}+\varrho_{12}x}{\sqrt{1-\varrho_{12}^2}}+\frac{\gamma\const{A}+\varrho_{13}x}{\sqrt{1-\varrho_{13}^2}}\right|\d x\right| \nonumber\\
& & {} + \left|\int_{-\alpha\const{A}}^0\phi_{0,1}(x) \left|\Delta\left(\const{A},\frac{\beta+\varrho_{12}x/\const{A}}{\sqrt{1-\varrho_{12}^2}},\frac{\gamma+\varrho_{13}x/\const{A}}{\sqrt{1-\varrho_{13}^2}}\right)\right|\d x\right|\nonumber\\
& \leq & \frac{1}{2}|\delta(\alpha\const{A})| + \frac{1}{2\sqrt{2\pi}} \left|\int_{-\alpha\const{A}}^0 \frac{1}{2\sqrt{2\pi}}\left|\frac{\beta\const{A}+\varrho_{12}x}{\sqrt{1-\varrho_{12}^2}}+\frac{\gamma\const{A}+\varrho_{13}x}{\sqrt{1-\varrho_{13}^2}}\right|\d x\right|\nonumber\\
& & {} + \left|\int_{-\alpha\const{A}}^0 \left|\Delta\left(\const{A},\frac{\beta+\varrho_{12}x/\const{A}}{\sqrt{1-\varrho_{12}^2}},\frac{\gamma+\varrho_{13}x/\const{A}}{\sqrt{1-\varrho_{13}^2}}\right)\right|\d x\right|\nonumber\\
& \leq & \frac{|\alpha|^3}{12\sqrt{2\pi}}\const{A}^3 + \frac{1}{2\sqrt{2\pi}}\int_0^{|\alpha|\const{A}} \left(\frac{|\beta|\const{A}+|\varrho_{12}|x}{\sqrt{1-\varrho_{12}^2}}+\frac{|\gamma|\const{A}+|\varrho_{13}|x}{\sqrt{1-\varrho_{13}^2}}\right) \d x\nonumber\\
& & {} + \const{A}^2 \int_{0}^{|\alpha|\const{A}} \eta\left(\const{A},\frac{|\beta|+|\varrho_{12}|x/\const{A}}{\sqrt{1-\varrho_{12}^2}},\frac{|\gamma|+|\varrho_{13}|x/\const{A}}{\sqrt{1-\varrho_{13}^2}}\right) \d x\nonumber\\
& \leq & \frac{|\alpha|^3}{12\sqrt{2\pi}}\const{A}^3 + \frac{|\alpha|}{2\sqrt{2\pi}}\left(\frac{|\beta|+|\varrho_{12}|\,|\alpha|}{\sqrt{1-\varrho_{12}^2}}+\frac{|\gamma|+|\varrho_{13}|\,|\alpha|}{\sqrt{1-\varrho_{13}^2}}\right) \const{A}^2\nonumber\\
& & {} + |\alpha|\, \eta\left(\const{A},\frac{|\beta|+|\varrho_{12}|\,|\alpha|}{\sqrt{1-\varrho_{12}^2}},\frac{|\gamma|+|\varrho_{13}|\,|\alpha|}{\sqrt{1-\varrho_{13}^2}}\right) \const{A}^3\nonumber\\
& = & \const{A}^2 \eta_1(\const{A},\alpha,\beta,\gamma)
\end{IEEEeqnarray}
where $\eta(\const{A},\alpha,\beta)$ is as in Proposition~\ref{prop:binary}, and where
\begin{IEEEeqnarray}{lCl}
\eta_1(\const{A},\alpha,\beta,\gamma) & \triangleq &  \frac{|\alpha|^3}{12\sqrt{2\pi}}\const{A} + \frac{|\alpha|}{2\sqrt{2\pi}}\left(\frac{|\beta|+|\varrho_{12}|\,|\alpha|}{\sqrt{1-\varrho_{12}^2}}+\frac{|\gamma|+|\varrho_{13}|\,|\alpha|}{\sqrt{1-\varrho_{13}^2}}\right)\nonumber\\
&& {} + |\alpha|\,\eta\left(\const{A},\frac{|\beta|+|\varrho_{12}|\,|\alpha|}{\sqrt{1-\varrho_{12}^2}},\frac{|\gamma|+|\varrho_{13}|\,|\alpha|}{\sqrt{1-\varrho_{13}^2}}\right) \const{A}.\nonumber
\end{IEEEeqnarray}
Here the first step follows from the Triangle Inequality; the second step follows by upper bounding $\phi_{0,1}(x)\leq 1$, $x\in\Reals$ and $\arcsin(x)\leq \pi/2$, $|x|\leq1$; the third step follows again from the Triangle Inequality, from the upper bound
\begin{equation*}
|\Delta(\const{A},\alpha,\beta)| \leq \const{A}^2 \eta(\const{A},\alpha,\beta)
\end{equation*}
and from the monotonicity of $\eta(\const{A},\alpha,\beta)$ in $(\const{A},|\alpha|,|\beta|)$; and the fourth step follows because, over the range of integration, $x\leq|\alpha|\const{A}$.

Along the same lines, we obtain for $\Delta_2(\const{A},\alpha,\beta,\gamma)$
\begin{IEEEeqnarray}{lCl}
|\Delta_2(\const{A},\alpha,\beta,\gamma)| & \leq & |\delta(\beta\const{A})|\left|\frac{1}{4} + \frac{1}{2\pi}\arcsin\left(\frac{\varrho_{13}-\varrho_{12}\varrho_{23}}{\sqrt{(1-\varrho_{12}^2)(1-\varrho_{23}^2)}}\right)\right| \nonumber\\
& & {} +  \left|\int_{-\beta\const{A}}^0\phi_{0,1}(y) \frac{1}{2\sqrt{2\pi}}\left|\frac{\varrho_{12}y}{\sqrt{1-\varrho_{12}^2}}+\frac{\gamma\const{A}+\varrho_{23}y}{\sqrt{1-\varrho_{23}^2}}\right|\d y\right|\nonumber\\
& & {} + \left|\int_{-\beta\const{A}}^0\phi_{0,1}(y) \left|\Delta\left(\const{A},\frac{\varrho_{12}y/\const{A}}{\sqrt{1-\varrho_{12}^2}},\frac{\gamma+\varrho_{23}y/\const{A}}{\sqrt{1-\varrho_{23}^2}}\right)\right|\d y\right| \nonumber\\
& \leq & \frac{|\beta|^3}{12\sqrt{2\pi}}\const{A}^3 + \frac{|\beta|}{2\sqrt{2\pi}}\left(\frac{|\varrho_{12}|\,|\beta|}{\sqrt{1-\varrho_{12}^2}}+\frac{|\gamma|+|\varrho_{23}|\,|\beta|}{\sqrt{1-\varrho_{23}^2}}\right)\const{A}^2\nonumber\\
& & {} + |\beta|\,\eta\left(\const{A},\frac{|\varrho_{12}|\,|\beta|}{\sqrt{1-\varrho_{12}^2}},\frac{|\gamma|+|\varrho_{23}|\,|\beta|}{\sqrt{1-\varrho_{23}^2}}\right) \const{A}^3\nonumber\\
& = & \const{A}^2 \eta_2(\const{A},\alpha,\beta,\gamma)
\end{IEEEeqnarray}
where
\begin{IEEEeqnarray}{lCl}
\eta_2(\const{A},\alpha,\beta,\gamma) & \triangleq & \frac{|\beta|^3}{12\sqrt{2\pi}}\const{A} + \frac{|\beta|}{2\sqrt{2\pi}}\left(\frac{|\varrho_{12}|\,|\beta|}{\sqrt{1-\varrho_{12}^2}}+\frac{|\gamma|+|\varrho_{23}|\,|\beta|}{\sqrt{1-\varrho_{23}^2}}\right)\nonumber\\
& & {} + |\beta|\,\eta\left(\const{A},\frac{|\varrho_{12}|\,|\beta|}{\sqrt{1-\varrho_{12}^2}},\frac{|\gamma|+|\varrho_{23}|\,|\beta|}{\sqrt{1-\varrho_{23}^2}}\right) \const{A}.\nonumber
\end{IEEEeqnarray}
Finally, we obtain for $\Delta_3(\const{A},\alpha,\beta,\gamma)$
\begin{IEEEeqnarray}{lCl}
|\Delta_3(\const{A},\alpha,\beta,\gamma)| & \leq &  |\delta(\gamma\const{A})|\left|\frac{1}{4}+\frac{1}{2\pi}\arcsin\left(\frac{\varrho_{12}-\varrho_{13}\varrho_{23}}{\sqrt{(1-\varrho_{13}^2)(1-\varrho_{23}^2)}}\right)\right|\nonumber\\
& & {} +  \left|\int_{-\gamma\const{A}}^0 \phi_{0,1}(z) \frac{1}{2\sqrt{2\pi}}\left|\frac{\varrho_{13}z}{\sqrt{1-\varrho_{13}^2}}+\frac{\varrho_{23}z}{\sqrt{1-\varrho_{23}^2}}\right| \d z\right| \nonumber\\
& & {} +  \left|\int_{-\gamma\const{A}}^0 \phi_{0,1}(z)\left|\Delta\left(\const{A},\frac{\varrho_{13}z/\const{A}}{\sqrt{1-\varrho_{13}^2}},\frac{\varrho_{23}z/\const{A}}{\sqrt{1-\varrho_{23}^2}}\right)\right|\d z\right|\nonumber\\
& \leq & \frac{|\gamma|^3}{12\sqrt{2\pi}}\const{A}^3 + \frac{|\gamma|}{2\sqrt{2\pi}}\left(\frac{|\varrho_{13}|\,|\gamma|}{\sqrt{1-\varrho_{13}^2}}+\frac{|\varrho_{23}|\,|\gamma|}{\sqrt{1-\varrho_{23}^2}}\right) \const{A}^2\nonumber\\
& & {} + |\gamma|\,\eta\left(\const{A},\frac{|\varrho_{13}|\,|\gamma|}{\sqrt{1-\varrho_{13}^2}},\frac{|\varrho_{23}|\,|\gamma|}{\sqrt{1-\varrho_{23}^2}}\right) \const{A}^3\nonumber\\
& = & \const{A}^2 \eta_3(\const{A},\alpha,\beta,\gamma)
\end{IEEEeqnarray}
where
\begin{IEEEeqnarray}{lCl}
\eta_3(\const{A},\alpha,\beta,\gamma) & \triangleq & \frac{|\gamma|^3}{12\sqrt{2\pi}}\const{A} + \frac{\gamma^2}{2\sqrt{2\pi}}\left(\frac{|\varrho_{13}|}{\sqrt{1-\varrho_{13}^2}}+\frac{|\varrho_{23}|}{\sqrt{1-\varrho_{23}^2}}\right)\nonumber\\
& & {} + |\gamma|\,\eta\left(\const{A},\frac{|\varrho_{13}|\,|\gamma|}{\sqrt{1-\varrho_{13}^2}},\frac{|\varrho_{23}|\,|\gamma|}{\sqrt{1-\varrho_{23}^2}}\right) \const{A}.\nonumber
\end{IEEEeqnarray}
This proves the claim.

\section{Appendix to Section~\ref{sub:mainproof}}
\label{app:mainproof}
We show that there exist bandlimited, unit-energy waveforms $g(\cdot)$ that satisfy \eqref{eq:g1}--\eqref{eq:g3} and that give rise to pairs $(\alpha_0,\beta_0)$ that are arbitrarily close to
\begin{equation}
\label{eq:app_alpha0}
\frac{1}{\sqrt{\WW\Nzero}}\frac{\frac{2}{\pi}+\frac{1}{2}\lambda}{\sqrt{\frac{1}{2}\lambda^2+\frac{4}{\pi}\lambda+1}},\qquad \lambda\in\Reals
\end{equation}
and
\begin{equation}
\label{eq:app_beta0}
\frac{1}{\sqrt{\WW\Nzero}}\frac{1+\lambda\frac{2}{\pi}}{\sqrt{\frac{1}{2}\lambda^2+\frac{4}{\pi}\lambda+1}}, \qquad \lambda\in\Reals.
\end{equation}
To this end, we first note that a sufficient condition for the waveform $g(\cdot)$ to satisfy \eqref{eq:g1}--\eqref{eq:g3} is
\begin{equation}
\label{eq:appmain_sufficient}
\bigl|g(t)\bigr| \leq \const{K}_1, \quad t\in\Reals \qquad \textnormal{and} \qquad |g(t)| \leq \frac{\const{K}_2}{t^2}, \quad |t|>\const{T}
\end{equation}
for some nonnegative $\const{K}_1$, $\const{K}_2$, and some positive $\const{T}$. Indeed, if \eqref{eq:appmain_sufficient} holds, then we have for every $\tau\in\Reals$
\begin{IEEEeqnarray}{lCl}
\sum_{\ell\neq 0} \left|g\left(\frac{\ell+\tau}{2\WW}\right)\right| & \leq & \sum_{\ell=-\infty}^{\infty} \left|g\left(\frac{\ell+\tau}{2\WW}\right)\right|\nonumber\\
& = & \sum_{\left|\frac{\ell+\tau}{2\WW}\right|\leq\const{T}+\frac{1}{2\WW}}\left|g\left(\frac{\ell+\tau}{2\WW}\right)\right| + \sum_{\left|\frac{\ell+\tau}{2\WW}\right|>\const{T}+\frac{1}{2\WW}}\left|g\left(\frac{\ell+\tau}{2\WW}\right)\right|\nonumber\\
& \leq & 2\,\const{K}_1\left(2\WW\,\const{T}+1\right) +  \sum_{\left|\frac{\ell+\tau}{2\WW}\right|>\const{T}+\frac{1}{2\WW}} \frac{\const{K}_2\bigl(2\WW\bigr)^2}{\left(\ell+\tau\right)^2}\nonumber\\
& \leq & 2\,\const{K}_1 \left(2\WW\,\const{T}+1\right) + 2\,\int_{2\WW\const{T}+1}^{\infty} \frac{\const{K}_2\bigl(2\WW\bigr)^2}{\left(t-1\right)^2} \d t\nonumber\\
& = & 2\,\const{K}_1\left(2\WW\,\const{T}+1\right) + 2\,\const{K}_2 \frac{2\WW}{\const{T}}\nonumber\\
& < & \infty
\end{IEEEeqnarray}
where the first step follows because $\bigl|g\bigl(\tau/(2\WW)\bigr)\bigr|$ is nonnegative; the third step follows from \eqref{eq:appmain_sufficient} and because there are not more than $2\WW\,\const{T}+1$ terms satisfying $\bigl|(\ell+\tau)/(2\WW)\bigr|\leq \const{T}+1/(2\WW)$; and the fourth step follows by upper bounding the sum by an integral.

The waveform $g(\cdot)$ corresponding to the Fourier Transform \eqref{eq:main_spectrum} is given by
\begin{equation}
 g(t) = \sqrt{\frac{2\WW}{\frac{1}{2}\lambda^2+\frac{4}{\pi}\lambda+1}}\Biggl(\sinc(2\WW\,t)+\frac{\lambda}{2}\sinc\left(2\WW\,t-\frac{1}{2}\right)+\frac{\lambda}{2}\sinc\left(2\WW\,t+\frac{1}{2}\right)\Biggr), \quad t\in\Reals.\label{eq:appmain_g}
\end{equation}
We note that the sum of the last two terms on the RHS of \eqref{eq:appmain_g} equals
\begin{equation*}
\frac{\lambda}{2}\sinc\left(2\WW\,t-\frac{1}{2}\right)+\frac{\lambda}{2}\sinc\left(2\WW\,t+\frac{1}{2}\right) = \left\{\begin{array}{cc} \displaystyle\frac{\lambda}{2}, \quad & t=\pm\frac{1}{4\WW} \\[8pt] \displaystyle -\frac{\lambda}{2}\frac{\cos(2\pi\WW\,t)}{\pi\bigl((2\WW\,t)^2-\frac{1}{4}\bigr)}, \quad & t\neq \pm\frac{1}{4\WW}\end{array} \right.
\end{equation*}
which satisfies \eqref{eq:appmain_sufficient} and hence also \eqref{eq:g1}--\eqref{eq:g3}. However, $t\mapsto\sinc(2\WW\,t)$ decays like $1/t$ and does neither satisfy \eqref{eq:g1} nor \eqref{eq:g3}. We therefore replace $t\mapsto\sinc(2\WW\,t)$ in \eqref{eq:appmain_g} by a raised-cosine pulse of bandwidth $\WW$ Hz and of roll-off factor $\xi$ to obtain the desired waveform
\begin{IEEEeqnarray}{lCl}
g_{\xi}(t) & = & \sqrt{\frac{2\WW}{\Psi(\xi)}} \left(\frac{1}{1+\xi}\sinc\left(\frac{2\WW\,t}{1+\xi}\right)\frac{\cos\left(\pi2\WW\frac{\xi}{1+\xi}t\right)}{1-4\left(2\WW\frac{\xi}{1+\xi}t\right)^2}\right.\nonumber\\
& & \qquad\qquad\quad {}  + \left.\vphantom{\frac{\cos\left(\pi2\WW\frac{\xi}{1+\xi}t\right)}{1-4\left(2\WW\frac{\xi}{1+\xi}t\right)^2}} \frac{\lambda}{2}\sinc\left(2\WW\,t-\frac{1}{2}\right)+\frac{\lambda}{2}\sinc\left(2\WW\,t+\frac{1}{2}\right)\right), \qquad t\in\Reals\label{eq:appmain_desired}
\end{IEEEeqnarray}
where
\begin{IEEEeqnarray}{lCl}
\Psi(\xi) & \triangleq & 2\WW \int \left(\frac{1}{1+\xi}\sinc\left(\frac{2\WW\,t}{1+\xi}\right)\frac{\cos\left(\pi2\WW\frac{\xi}{1+\xi}t\right)}{1-4\left(2\WW\frac{\xi}{1+\xi}t\right)^2}\right.\nonumber\\
& & \qquad\qquad {} +\left.\vphantom{\frac{\cos\left(\pi2\WW\frac{\xi}{1+\xi}t\right)}{1-4\left(2\WW\frac{\xi}{1+\xi}t\right)^2}}\frac{\lambda}{2}\sinc\left(2\WW\,t-\frac{1}{2}\right)+\frac{\lambda}{2}\sinc\left(2\WW\,t+\frac{1}{2}\right)\right)^2 \d t.\nonumber
\end{IEEEeqnarray}
For every $\xi>0$, the raised-cosine pulse on the RHS of \eqref{eq:appmain_desired} satisfies \eqref{eq:appmain_sufficient} and hence also \eqref{eq:g1}--\eqref{eq:g3}. It therefore follows from the Triangle Inequality that also the pulse $g_{\xi}(\cdot)$ satisfies \eqref{eq:g1}--\eqref{eq:g3}, since for every $t\in\Reals$
\begin{IEEEeqnarray}{lCl}
\left|g_{\xi}\left(t\right)\right| & \leq & \sqrt{\frac{2\WW}{\Psi(\xi)}}\left|\frac{1}{1+\xi}\sinc\left(\frac{2\WW\,t}{1+\xi}\right)\frac{\cos\left(\pi2\WW\frac{\xi}{1+\xi}t\right)}{1-4\left(2\WW\frac{\xi}{1+\xi}t\right)^2}\right|\nonumber\\
& & {} + \sqrt{\frac{2\WW}{\Psi(\xi)}}\left|\frac{\lambda}{2}\sinc\left(2\WW\,t-\frac{1}{2}\right)+\frac{\lambda}{2}\sinc\left(2\WW\,t+\frac{1}{2}\right)\right|.\nonumber
\end{IEEEeqnarray}
Furthermore, since $g_{\xi}(\cdot)$ is of unit energy and bandlimited to $\WW$ Hz, it follows that $g_{\xi}(\cdot)$ is in $\set{G}'$, and hence the pairs $(\alpha_0,\beta_0)$ that arise from $g_{\xi}(\cdot)$ are in $\set{B}'$.

It remains to show that such $(\alpha_0,\beta_0)$ approach \eqref{eq:app_alpha0} and \eqref{eq:app_beta0} as $\xi$ tends to zero. To this end, we show that for every $t\in\Reals$
 \begin{equation}
 \label{eq:appmain_limit}
\lim_{\xi\downarrow 0} g_{\xi}(t) = g(t).
\end{equation}
This implies
\begin{IEEEeqnarray}{lCcCl}
\lim_{\xi\downarrow 0}\alpha_0 & = & \lim_{\xi\downarrow 0} \frac{1}{\sqrt{(2\WW)(\WW\Nzero)}}g_{\xi}\left(-\frac{1}{4\WW}\right) & = & \frac{1}{\sqrt{\WW\Nzero}}\frac{\frac{2}{\pi}+\frac{1}{2}\lambda}{\sqrt{\frac{1}{2}\lambda^2+\frac{4}{\pi}\lambda+1}} \nonumber
\end{IEEEeqnarray}
and
\begin{IEEEeqnarray}{lCcCl}
\lim_{\xi\downarrow 0}\beta_0 & = & \lim_{\xi\downarrow 0} \frac{1}{\sqrt{(2\WW)(\WW\Nzero)}}g_{\xi}\left(0\right) & = & \frac{1}{\sqrt{\WW\Nzero}}\frac{1+\frac{2}{\pi}\lambda}{\sqrt{\frac{1}{2}\lambda^2+\frac{4}{\pi}\lambda+1}} \nonumber
\end{IEEEeqnarray}
which in turn proves the claim.

To prove \eqref{eq:appmain_limit}, we note that $t\mapsto \sqrt{\Psi(\xi)}g_{\xi}(t)$---which we shall denote by $t\mapsto h_{\xi}(t)$---satisfies
\begin{equation}
\lim_{\xi\downarrow 0} h_{\xi}(t) = \sqrt{\left(\frac{1}{2}\lambda^2+\frac{4}{\pi}\lambda+1\right)}\, g(t), \qquad t\in\Reals \label{eq:appmain_unnormalized}
\end{equation}
where $g(\cdot)$ is the Inverse Fourier Transform of \eqref{eq:main_spectrum}.
Furthermore, since the Fourier Transform of $h_{\xi}(\cdot)$ satisfies $|\hat{h}_{\xi}(f)|^2\leq\bigl(1/2\,\lambda^2+4/\pi\,\lambda+1\bigr)|\hat{g}(f)|^2$, $f\in\Reals$ (with $\hat{g}(\cdot)$ given by \eqref{eq:main_spectrum}), and since $\int\bigl(1/2\,\lambda^2+4/\pi\,\lambda+1\bigr)|\hat{g}(f)|^2\d f<\infty$, it follows from the Dominated Convergence Theorem \cite[Thm.~1.34]{rudin87} that
\begin{equation}
\lim_{\xi\downarrow 0} \Psi(\xi) = \lim_{\xi\downarrow 0} \int \bigl(h_{\xi}(t)\bigr)^2\d t = \lim_{\xi\downarrow 0}\int \left|\hat{h}_{\xi}(f)\right|^2\d f = \int \lim_{\xi\downarrow 0}\left|\hat{h}_{\xi}(f)\right|^2 \d f = \frac{1}{2}\lambda^2 + \frac{4}{\pi}\lambda + 1 \label{eq:appmain_Psilim}
\end{equation}
where the second step follows from Parseval's Theorem \cite[Thm.~6.2.9]{lapidoth09}, and where the last step follows from \eqref{eq:appmain_unnormalized} and because $g(\cdot)$ is of unit energy. It thus follows that for every $t\in\Reals$
\begin{equation}
\lim_{\xi\downarrow 0} g_{\xi}(t) = \lim_{\xi\downarrow 0}  \frac{h_{\xi}(t)}{\sqrt{\Psi(\xi)}} = \frac{\lim_{\xi\downarrow 0} h_{\xi}(t)}{\sqrt{\lim_{\xi\downarrow 0}\Psi(\xi)}}  = g(t)
\end{equation}
where the last step follows from \eqref{eq:appmain_unnormalized} and \eqref{eq:appmain_Psilim}. This proves \eqref{eq:appmain_limit} and hence also the claim.



\end{document}